\documentclass[11pt]{article}
\usepackage[margin=1in]{geometry}
\usepackage{amsmath}
\usepackage{amssymb}
\usepackage{amsthm}
\usepackage{algorithm}
\usepackage{algpseudocode}
\usepackage{booktabs}
\usepackage{multirow}
\usepackage{graphicx}
\usepackage{url}
\usepackage[hidelinks]{hyperref}
\usepackage{xurl}

\newtheorem{theorem}{Theorem}
\newtheorem{lemma}{Lemma}

\theoremstyle{definition}

\newcommand{\F}{\mathbb{F}}
\newcommand{\R}{\mathbf{R}}
\newcommand{\Rk}[2]{\mathbf{R}_{#1}(#2)}
\newcommand{\Cyct}[1]{\mathrm{C}_{#1}}
\newcommand{\Fullt}[1]{\mathrm{P}_{#1}}
\newcommand{\Trunct}[1]{\mathrm{T}_{#1}}
\newcommand{\Negat}[1]{\mathrm{C}^{-}_{#1}}
\newcommand{\GL}{\mathrm{GL}}
\newcommand{\PGL}{\mathrm{PGL}}
\newcommand{\Gal}{\mathrm{Gal}}
\newcommand{\Aut}{\mathrm{Aut}}
\newcommand{\Kappa}{\mathrm{K}}
\newcommand{\rank}{\operatorname{rank}}
\newcommand{\RREF}{\operatorname{RREF}}
\newcommand{\mmt}[3]{\langle #1,#2,#3\rangle}

\algnewcommand\algorithmicparallelfor{\textbf{parallel for}}
\algdef{SE}[PARFOR]{ParallelFor}{EndParallelFor}[1]%
{\algorithmicparallelfor\ #1\ \algorithmicdo}{\algorithmicend\ \algorithmicparallelfor}

\newif\ifanon
\anonfalse

\title{Automated Lower Bounds for Bilinear Complexity\\ over Finite Fields}
\ifanon
\author{}
\makeatletter
\renewcommand{\@maketitle}{%
  \newpage
  \null
  \vskip 2em%
  \begin{center}%
    {\LARGE \@title \par}%
  \end{center}%
  \par
\vskip 1.5em}
\makeatother
\else
\author{Chengu Wang\\ \texttt{wangchengu@gmail.com}}
\fi
\date{}

\begin{document}
\maketitle

\begin{abstract}
  We present a general, automated framework for proving lower bounds on the
  bilinear complexity (tensor rank) of multiplication problems over a finite field
  $\mathbb{F}_q$. The framework is parameterized only by the multiplication tensor and by a
  group of rank-preserving symmetries acting on one argument: it classifies
  the subspaces of the argument into orbits under the group, runs a dynamic program over
  the orbits combining four lower-bound techniques, and emits a proof certificate
  that a verifier rechecks, typically faster than the search.

  Instantiating the framework for matrix multiplication, we improve the lower bounds
  for three small formats over $\mathbb{F}_2$, most notably showing that
  the bilinear complexity of multiplying two $3 \times 3$ matrices over $\mathbb{F}_2$ is at
  least $20$, raising the bound of $19$ that had stood since Bl\"aser (2003).
  Instantiating it for polynomial multiplication, we obtain eighteen new lower
  bounds over $\mathbb{F}_2$ and $\mathbb{F}_3$, for the full product, cyclic
  convolution, and the truncated (modulo $x^N$) and negacyclic (modulo $x^N+1$)
  products.
  Every bound is backed by a machine-checkable certificate.
\end{abstract}

\section{Introduction}\label{sec:intro}

A \emph{bilinear} algorithm for a bilinear map $\beta\colon A\times B\to C$ over a
field $\F$ computes a fixed set of products of a linear form in the $A$-variables
with a linear form in the $B$-variables, and then reads off each coordinate of
the output as a linear combination of those products. The minimum number of such
products is the \emph{bilinear complexity} of $\beta$, and it equals the rank of
the order-$3$ tensor naturally associated with $\beta$. Two families of bilinear
maps have driven the subject: matrix multiplication and polynomial
multiplication. For matrix multiplication the central quantity is
$\R(\mmt{l}{m}{n})$, the rank of the tensor for multiplying an $l\times m$ by an
$m\times n$ matrix; for polynomial multiplication it is the rank of the tensor
for multiplying two polynomials.

Over small finite fields these ranks are notoriously hard to pin down, and even
for tiny formats the best known lower and upper bounds often disagree. This paper
develops one automated method that attacks both families, and uses it to push
several of these small cases.

\subsection{Lower bounds for matrix multiplication}

Strassen~\cite{strassen1969gaussian} showed $\R(\mmt{2}{2}{2})\le 7$ over $\mathbb{Z}$, and
Winograd~\cite{winograd1971multiplication} proved the matching lower bound over $\mathbb{Q}$.
Hopcroft and Kerr~\cite{hopcroft1971minimizing} then established several lower
bounds over $\F_2$, including $\Rk{\F_2}{\mmt{2}{2}{2}}\ge 7$ and
$\Rk{\F_2}{\mmt{2}{3}{3}}\ge 15$.
Ja'Ja' and Takche~\cite{ja1985improved} later proved
$\Rk{\F_2}{\mmt{3}{2}{n}}\ge\lceil\tfrac{105}{23}n\rceil$, which at $n=4$ gives
$\Rk{\F_2}{\mmt{2}{3}{4}}=\Rk{\F_2}{\mmt{3}{2}{4}}\ge 19$, along with
$\Rk{\F_2}{\mmt{n}{n}{n}}\ge 2n^2+\lceil\tfrac{3}{46}n\rceil-2$ and
$\Rk{\F_2}{\mmt{3}{3}{n}}\ge\lceil\tfrac{91}{16}n\rceil$, and
Bshouty~\cite{bshouty1989lower} proved $\Rk{\F_2}{\mmt{n}{n}{n}}\ge \tfrac{5}{2}n^2-o(n^2)$.
Bl\"aser established the influential bounds
$\R(\mmt{l}{m}{n})\ge lm+mn+l-m+n-3$ over algebraically closed
fields~\cite{blaser1999lower}, $\R(\mmt{n}{n}{n})\ge \tfrac{5}{2}n^2-3n$ over
arbitrary fields~\cite{blaser1999fivehalf}, and finally
$\R(\mmt{n}{m}{n})\ge 2mn+2n-m-2$ for $m\ge n\ge 3$~\cite{blaser2003complexity} over arbitrary fields,
which implies $\R(\mmt{3}{3}{3})\ge 19$.
Shpilka~\cite{shpilka2001lower} sharpened the asymptotics over small fields, proving
$\Rk{\F_2}{\mmt{n}{n}{n}}\ge 3n^2-o(n^2)$ and
$\Rk{\F_p}{\mmt{n}{n}{n}}\ge(\tfrac{5}{2}+\tfrac{3}{2(p^3-1)})n^2-o(n^2)$.
On the upper side
Laderman~\cite{laderman1976noncommutative} multiplied $3\times 3$ matrices with
$23$ products over $\mathbb{Z}$. Thus the rank of $\mmt{3}{3}{3}$ has remained trapped in $[19,23]$,
making it the prototypical open small format.

\subsection{Lower bounds for polynomial multiplication}

For the \emph{full} product of two degree-$(N-1)$ polynomials, the
output flattening immediately gives the lower bound $2N-1$; see also
Fiduccia and Zalcstein~\cite{fiduccia1977algebras}. This bound is attained
by evaluation--interpolation at $2N-1$ points of
$\mathbb{P}^1(\F_q)$ whenever $2N-1\le q+1$
\cite{toom1963complexity,kaminski1989multiplicative}.
Kaminski and Bshouty~\cite{kaminski1989multiplicative} proved that, when
$q/2<N-1\le q+1$, one has $ R_{\mathbb F_q}(P_N) = 3N-2-\lfloor q/2 \rfloor$.
For larger $N$, classical lower bounds come from the connection between
bilinear algorithms and linear codes developed by Brockett and
Dobkin~\cite{brockett1978optimal} and by Lempel and
Winograd~\cite{lempel1977new}. In particular, a rank-$r$ algorithm for
the full product yields a $q$-ary linear code of length $r$, dimension
$N$, and minimum distance at least $N$. Applying the Griesmer
bound~\cite{griesmer1960bound} (in its $q$-ary
form~\cite{solomon1965algebraically}) or the code tables of
Grassl~\cite{grassl2007codetables} therefore gives explicit lower
bounds. Some small instances have also been treated by exhaustive
search~\cite{barbulescu2012finding}.

A second strand is multiplication in a polynomial quotient ring
$\F_q[x]/(x^N-\gamma)$: \emph{cyclic} convolution ($\gamma=1$, modulo $x^N-1$),
the \emph{truncated} (or short) product ($\gamma=0$, modulo $x^N$), and the
\emph{negacyclic} product ($\gamma=-1$, modulo $x^N+1$), which coincides with the
cyclic one in characteristic $2$. The lower bounds for all three are governed by
the factorization of $x^N-\gamma$: Winograd~\cite{winograd1977some} proved that
multiplication modulo a polynomial with $k$ distinct irreducible factors needs at
least $2N-k$ products (a bound Alder and
Strassen~\cite{alder1981algorithmic} later extended to all associative algebras),
and Bl\"aser's characterization of algebras of minimal bilinear
complexity~\cite{blaser2005complete}, combined with the minimal-rank theory of
Averbuch, Galil, and
Winograd~\cite{averbuch1988classification,averbuch1991classification},
strengthens these on the local-algebra factors. The smallest cyclic and
truncated cases over $\F_2$ and $\F_3$ were settled exactly by the exhaustive
searches of Barbulescu et al.~\cite{barbulescu2012finding} and of
Covanov~\cite{covanov2019improved}.

\subsection{Our contributions}

We give a single framework, set up and instantiated per problem in
Section~\ref{sec:framework}, that proves bilinear-complexity lower bounds for any
multiplication tensor equipped with a group of first-argument symmetries. It
classifies the subspaces of the first argument up to symmetry
(Section~\ref{sec:orbits}), runs a
dimension-sweeping dynamic program with four lower-bound techniques
(Section~\ref{sec:techniques}), and produces certificates that an independent
verifier rechecks deterministically (Section~\ref{sec:certificates}).

Our main results concern matrix multiplication, where the framework improves the
best known lower bounds for three small formats over $\F_2$.

\begin{theorem}[Matrix multiplication]\label{thm:matrix}
  \[
    \Rk{\F_2}{\mmt{3}{3}{3}}\ge 20,\quad
    \Rk{\F_2}{\mmt{3}{3}{4}}\ge 25,\quad
    \Rk{\F_2}{\mmt{3}{4}{4}}\ge 29.
  \]
\end{theorem}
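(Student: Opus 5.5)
The plan is to prove Theorem~\ref{thm:matrix} by a computer-assisted substitution argument that works on the first argument $A=\F_2^{l\times m}$ of $\mmt{l}{m}{n}$. Write a rank-$r$ decomposition as $T=\sum_{i=1}^r a_i\otimes b_i\otimes c_i$. For any subspace $U\subseteq A$, the products whose $a_i$ lie outside $U^{\perp}$ (equivalently, those that do not vanish on a chosen complement) can be charged against $\dim$ of the quotient. Restricting $T$ to the $A$-inputs in a subspace $V$ yields a restricted tensor $T|_V$ whose rank is at most the number of $a_i$ that are nonzero on $V$. Define $f(V)$ as a proven lower bound on $\rank(T|_V)$. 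The basic inequality is then: for $V'\subset V$ of codimension $1$, $f(V)\ge f(V')+1$. The improvement over this comes from exhibiting, for each $V$, a hyperplane $V'$ such that at least $k\ge 2$ of the products must die. The main point is that $f$ is invariant under the symmetry group $\GL_l(\F_2)\times\GL_m(\F_2)$ acting on $A$ by $X\mapsto PXQ^{-1}$, which preserves rank because $\GL_n$ acts on the third factor. Hence it suffices to compute $f$ on orbit representatives of subspaces of $A$.

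The steps, in order, are as follows.

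First, enumerate the orbits of subspaces of $\F_2^{3\times 3}$, $\F_2^{3\times 3}$ (for $\mmt{3}{3}{4}$ the $A$-side is still $3\times 3$), and $\F_2^{3\times 4}$ under the group action, dimension by dimension. Canonical forms are computed by RREF after minimizing over group elements, and the orbit counts are manageable: $|A|\le 2^{12}$, and the groups have order up to $168\cdot 20160$.

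Second, run a dynamic program from dimension $0$ upward, taking for each orbit representative $V$ the best of four bounds:
\begin{itemize}
\item[(i)] the flattening bound, i.e.\ the rank of $T|_V$ viewed as a map $V\to B^*\otimes C$, together with $\dim$ of the $C$-span;
\item[(ii)] the hyperplane recursion $f(V)\ge\min_{V'}$ taken over the hyperplanes $V'\subset V$ through which some product must be eliminated, using a Hopcroft--Kerr style argument: since $\F_2$ has only one nonzero scalar, a product $a_i$ with $a_i|_V\ne0$ kills exactly one hyperplane, so counting $\F_2$-points forces several products to share hyperplanes;
\item[(iii)] a Bl\"aser-type substitution bound that adds $\rank$ contributions from the matrix ranks of elements in $V$, namely elements of rank $\ge2$ force extra products;
\item[(iv)] a combinatorial covering bound. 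If $r$ products are nonzero on $V$, then every nonzero $x\in V$ must avoid the union of the $r$ hyperplanes $\ker a_i|_V$ in a way consistent with $\rank T(x)\le$ the number of $i$ with $a_i(x)\ne0$. Here $\rank T(x)=n\cdot\rank(x)$ is a known function of the orbit, so one gets a weight condition: the code spanned by $(a_i(x))_i$ has each codeword of weight at least $n\cdot\rank x$. A Griesmer/linear-programming bound on such a code of length $r$ and dimension $\dim V$ then gives $r$. Refining via the orbit structure of minimum-rank elements in $V$ gives $r\ge f(V')+w$ for a suitable subspace.
\end{itemize}

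Third, read off $f(A)$ and check that it reaches $20$, $25$ and $29$ respectively, emitting a certificate that lists, for each orbit, the technique and witness used. An independent verifier then rechecks the orbit classification (closure under generators, and a consistency check that the orbit sizes sum to the Gaussian binomial counts) and each local inequality.

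The main obstacle I expect is closing the final one-unit gap: the plain recursion plus flattenings reproduces Bl\"aser's $19$ but not $20$. The extra unit must come from subspaces of moderate dimension in which every hyperplane choice is "tight". For these I would first try the code-weight bound (iv) applied with the exact rank distribution of the subspace's elements, since over $\F_2$ the Griesmer bound is strictly stronger than the Singleton-type counting. If that fails on a few stubborn orbits, I would fall back on a finer case split that enumerates the possible restricted decompositions $\{a_i|_V\}$ up to symmetry, which is feasible because $\dim V$ is small there. Computational feasibility of the orbit enumeration in the middle dimensions of $\F_2^{3\times 4}$ (dimensions $5$–$7$) is the secondary risk.
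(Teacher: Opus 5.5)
\textbf{There is a genuine gap.} Your overall shape matches the paper: orbits of first-argument subspaces under $X\mapsto PXQ^{-1}$, a sweep by dimension, and flattenings at the base. Two things break, however.

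First, your ``basic inequality'' $f(V)\ge f(V')+1$ is false for an arbitrary hyperplane $V'\subset V$. It holds only when $V'$ is the kernel of the restricted first component of some term of an optimal decomposition of $T_V$. For example, for $\mmt{2}{2}{2}$ the subspace $\{a_{00}=a_{11}=0\}$ gives a tensor of rank exactly $4$, and so does its hyperplane $a_{01}=a_{10}$ (Orbits~5 and~2 of the toy example). Your rule would therefore certify rank $\ge 5$ for a rank-$4$ tensor. Since the decomposition is unknown, the only sound one-step rules are $f(V)\ge\max_{V'}f(V')$ (monotonicity, with no $+1$) and $f(V)\ge 1+\min_{V'}f(V')$ (substitution).

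Second, and decisively, nothing in (i)--(iv) closes the one-unit gap that is the whole content of the theorem, and you defer that step to an unspecified fallback.
\begin{itemize}
\item In (ii), what $\F_2$ buys is only finiteness: there are $2^{\dim V}-1$ hyperplanes. Pigeonhole therefore forces two products onto one hyperplane only when the rank exceeds that number, i.e.\ only in very small dimension.
\item Item (iii) is not a defined inequality.
\item Item (iv) is far too weak. Averaging codeword weights over $x\in A$ for $\mmt{3}{3}{3}$ gives $256\,r\ge 3\sum_{x}\rank x=3423$, i.e.\ only $r\ge 14$.
\end{itemize}

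The missing idea is the paper's \emph{substitution with backtracking}. List the canonical first components (nonzero forms modulo $S^{\perp}$, up to scaling) of a hypothetical decomposition with exactly $k$ terms in non-decreasing order. Then run a depth-first search over prefixes of this list. At each node, impose every subset of the chain that contains the newest form. This kills at least as many terms as the subset has entries (a repeated form counts repeatedly) and lands in an already-certified, strictly smaller orbit with bound $b$. The node closes when the number of entries plus $b$ is at least $k+1$, and every branch must close.

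Aiming one unit above the current bound keeps the branches shallow. That is what makes the search usable at every dimension, including $V=A$ with its $511$ canonical forms in the $3\times 3$ case. Your fallback of enumerating whole restricted decompositions is justified only for small $\dim V$, and it does not say how each case is refuted. Your list also lacks the forced-product step: the Hopcroft--Kerr lemma combined with exhaustive enumeration of the $q^{s(t-s)}$ residual completions. As it stands, the plan does not yield the bounds $20$, $25$, $29$.
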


The headline result is $\Rk{\F_2}{\mmt{3}{3}{3}}\ge 20$, which raises the lower bound of $19$ that had
stood since Bl\"aser (2003)~\cite{blaser2003complexity}; the search finds this proof
in about $40$ minutes on a laptop and the certificate verifies in seconds. The same
certificate has since been re-verified independently by
Beuchert~\cite{beuchert_2026_20752782}, using a checker written from scratch in a
different programming language.

The tensor rank improvement over Bl\"aser's proof~\cite{blaser2003complexity} has two ingredients.
First, our framework fixes the ground field to $\F_2$, while his proof works over
arbitrary fields. Over $\F_2$ the candidate components of an optimal
decomposition form a small finite set, so the case analysis can branch on
every one. Second, the breadth of the induction: his proof passes through
some hand-crafted subspaces, whereas our argument runs
over more subspaces (all subspaces of the first $3\times 3$ matrix
argument), which makes the analysis tighter. The price is generality:
our result holds over $\F_2$ alone and for a single format,
whereas Bl\"aser's bound holds over arbitrary fields and for many formats.

For polynomial multiplication the same framework yields eighteen new lower bounds
over $\F_2$ and $\F_3$, for the full product and for the
three quotient families $\F_q[x]/(x^N-\gamma)$.

\begin{theorem}[Polynomial multiplication]\label{thm:poly}
  Write $\Rk{\F_q}{\Fullt{N}}$, $\Rk{\F_q}{\Cyct{N}}$, $\Rk{\F_q}{\Trunct{N}}$, and
  $\Rk{\F_q}{\Negat{N}}$ for the bilinear complexity over $\F_q$ of, respectively, the
  full product of two degree-$(N-1)$ polynomials, cyclic convolution (modulo $x^N-1$),
  the truncated product (modulo $x^N$), and the negacyclic product (modulo $x^N+1$),
  all taking two length-$N$ input coefficient vectors. Then
  \[
    \begin{aligned}
      \text{full:}       & \quad \Rk{\F_2}{\Fullt{6}}\ge 16,\ \ \Rk{\F_2}{\Fullt{7}}\ge 19,\ \ \Rk{\F_2}{\Fullt{8}}\ge 21,\ \ \Rk{\F_3}{\Fullt{6}}\ge 14;                                      \\
      \text{cyclic:}     & \quad \Rk{\F_2}{\Cyct{7}}\ge 13,\ \ \Rk{\F_2}{\Cyct{8}}\ge 19,\ \ \Rk{\F_2}{\Cyct{10}}\ge 22,\ \ \Rk{\F_3}{\Cyct{9}}\ge 19;                                         \\
      \text{truncated:}  & \quad \Rk{\F_2}{\Trunct{6}}\ge 13,\ \ \Rk{\F_2}{\Trunct{7}}\ge 16,\ \ \Rk{\F_2}{\Trunct{8}}\ge 19,\ \ \Rk{\F_2}{\Trunct{9}}\ge 21,\ \ \Rk{\F_2}{\Trunct{10}}\ge 24, \\
      & \quad \Rk{\F_2}{\Trunct{11}}\ge 27,\ \ \Rk{\F_3}{\Trunct{8}}\ge 17,\ \ \Rk{\F_3}{\Trunct{9}}\ge 19,\ \ \Rk{\F_3}{\Trunct{10}}\ge 21;                                \\
      \text{negacyclic:} & \quad \Rk{\F_3}{\Negat{9}}\ge 19.
    \end{aligned}
  \]
\end{theorem}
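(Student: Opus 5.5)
Each of the eighteen bounds in Theorem~\ref{thm:poly} is a finite statement about one explicit tensor over $\F_2$ or $\F_3$. I plan to prove them all with the same automated pipeline used for Theorem~\ref{thm:matrix}, instantiated once per tensor. For each family I first write the tensor $T$ as a bilinear map $\F_q^N\times\F_q^N\to\F_q^{M}$, with $M=2N-1$ for $\Fullt{N}$ and $M=N$ for the quotient products. I then fix a group $G$ of rank-preserving symmetries acting on the first argument. For the quotient rings $\F_q[x]/(x^N-\gamma)$ the natural choice is multiplication by units of the ring, since $(ua)\cdot b = u\cdot(ab)$ and $u^{-1}$ can be absorbed on the output side. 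This is augmented by ring automorphisms compatible with $\gamma$: $x\mapsto cx$ with $c^N=1$, Frobenius, and for $\Trunct{N}$ the substitutions $x\mapsto x+\lambda x^2+\cdots$. For $\Fullt{N}$ I use the action of $\PGL_2(\F_q)$ by Möbius substitution on degree-$(N-1)$ forms, together with the reversal symmetry; this action is rank-preserving by the standard homogenization argument.

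With $G$ fixed, I classify subspaces $U\le\F_q^N$ into $G$-orbits. I then run the dimension-sweeping dynamic program, which computes for each orbit representative a lower bound on the rank of the restriction of $T$ to $U\times\F_q^N$. Four ingredients feed into it:
\begin{itemize}
  \item[(i)] flattening and rank bounds of the slice space;
  \item[(ii)] the substitution step: if a rank-$r$ decomposition uses a first-argument form vanishing on $U$, then quotienting reduces to a smaller subspace with $r$ decreased;
  \item[(iii)] branching over all possible rank-one components, which is finite over $\F_q$;
  \item[(iv)] Winograd/Alder--Strassen-type bounds in terms of $2N-k$ for the quotient restrictions.
\end{itemize}
The value at $U=\F_q^N$ is the claimed bound. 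For each instance I emit a certificate listing orbit representatives, the group elements that witness membership in each orbit, and, for each node, which technique and which sub-orbits justify its value. An independent verifier then rechecks every node.

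The logical proof of each inequality is therefore the soundness of the four techniques, which are lemmas proved generally in Section~\ref{sec:techniques}, together with a successful verification of the emitted certificate. I would present the theorem as a corollary once soundness is established, listing each instance with its certificate and verification time.

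I expect the main obstacle to be orbit enumeration for the larger instances, such as $\Trunct{11}$ and $\Cyct{10}$ over $\F_2$ and the $N=9,10$ cases over $\F_3$. The number of subspaces grows like $q^{N^2/4}$, and the unit group of a local ring like $\F_2[x]/(x^{11})$ is only of size $2^{10}$, so it does not reduce the orbit count enough. To handle this, I would first restrict the DP to the dimensions actually reached by the substitution chain from the top. Second, I would use canonical forms computed via RREF under the group action to avoid storing orbits explicitly, and prune any orbit whose trivial flattening bound already suffices. If that still fails, I would split by the factorization of $x^N-\gamma$, handling the semisimple factors by the known exact results and running the search only on the local factors.
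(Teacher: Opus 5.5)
Your overall architecture matches the paper's: a first-argument symmetry group, orbit classification, a dimension-sweeping dynamic program, and a certificate rechecked by a verifier. The problem is in the steps you let the program use.

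\textbf{The dynamic-program steps.} Ingredient (iv) is not sound inside the dynamic program. The Winograd and Alder--Strassen bounds concern the structure tensor of an algebra. For a proper subspace $U\subsetneq R$, the tensor $T_U$ of $U\times R\to R$ is not such a tensor. Even the natural analogue $\dim U+N-k$ fails: for $U=\operatorname{span}(x^{N-1})$ in $\F_q[x]/(x^N)$, $T_U$ is the single product $a_{N-1}\otimes b_0\otimes c_{N-1}$. At the root, (iv) is valid but only gives $2N-k$, which is below every target in the theorem. It is also not one of the techniques of Section~\ref{sec:techniques}. In its place you need the two steps your list omits: degenerate reduction (monotonicity, Lemma~\ref{lem:foundation}) and the Hopcroft--Kerr forced-product step (Lemma~\ref{lem:forced-product}).

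Likewise, (ii)--(iii) must be the paper's substitution with backtracking, not one substitution per branch. Branching on a single component only gives $\R(T_U)\ge 1+\min_f\R(T_{U\cap\ker f})$. For Orbit~6 of Appendix~\ref{sec:toy-in-detail} that yields $5$, not $6$. The search must:
\begin{itemize}
  \item branch on nondecreasing chains (multisets) of canonical first components, i.e.\ forms that are \emph{nonzero} on $U$, taken up to scaling modulo $U^{\perp}$;
  \item at each node, try every subset of the chain that contains the newest entry;
  \item count repeated entries as distinct killed terms.
\end{itemize}
Your (ii) has the substitution backwards: you pass to the kernel in $U$ of a form that is nonzero on $U$. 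The second and third components play no role, so branching on full rank-one terms only wastes effort.

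\textbf{The Chinese-remainder fallback.} Splitting $R$ by the Chinese remainder theorem and adding known ranks of the factors is unsound. It presupposes that bilinear complexity is additive over direct products of algebras, which is not known (Strassen's direct sum problem). For instance, $\F_2[x]/(x^7-1)\cong\F_2\times\F_8\times\F_8$, and $1+6+6=13$ is exactly the new bound for $\Cyct{7}$. Had additivity been available, that case would already have been settled; the previous lower bound was $12$.

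The paper never decomposes $R$. It searches the whole ring with $\Sigma=R^*/\F_q^*$ and with $\Kappa=\Aut(R)$ taken as all $y$ satisfying $y^N=\gamma$ and $\F_q[y]=R$. This is larger than your list: for $\gamma=1$ it contains $x\mapsto x^k$ with $\gcd(k,N)=1$, and for $\gamma=0$ it contains every $y_1x+\dots+y_{N-1}x^{N-1}$ with $y_1\neq 0$, not only those with $y_1^N=1$. A subgroup is still sound, just less effective. With the full $\Aut(R)$ the orbit explosion you anticipate does not occur: the paper reports minutes for $\Trunct{11}$ over $\F_2$.
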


\begin{table}[!ht]
  \centering
  \small
  \setlength{\tabcolsep}{10pt}
  \begin{tabular}{lllll}
    \toprule
    family & problem                    & prev LB                                                                       & \textbf{our LB}        & prev UB                                                      \\
    \midrule
    \multirow{3}{*}{Matrix}
    & $\Rk{\F_2}{\mmt{3}{3}{3}}$ & $19$~\cite{blaser2003complexity}                                              & $\mathbf{20}$ & $23$~\cite{laderman1976noncommutative}                       \\
    & $\Rk{\F_2}{\mmt{3}{3}{4}}$ & $24$~\cite{blaser2003complexity}                                              & $\mathbf{25}$ & $29$~\cite{smirnov2013bilinear}                              \\
    & $\Rk{\F_2}{\mmt{3}{4}{4}}$ & $28$~\cite{blaser1999lower}                                                   & $\mathbf{29}$ & $38$~\cite{smirnov2013bilinear}                              \\
    \midrule
    \multirow{4}{*}{Full}
    & $\Rk{\F_2}{\Fullt{6}}$     & $15$~\cite{lempel1977new,brockett1978optimal,grassl2007codetables} & $\mathbf{16}$ & $17$~\cite{montgomery2005five}                               \\
    & $\Rk{\F_2}{\Fullt{7}}$     & $18$~\cite{lempel1977new,brockett1978optimal,grassl2007codetables} & $\mathbf{19}$ & $22$~\cite{montgomery2005five}                               \\
    & $\Rk{\F_2}{\Fullt{8}}$     & $20$~\cite{lempel1977new,brockett1978optimal,grassl2007codetables} & $\mathbf{21}$ & $26$~\cite{fan2007comments}                 \\
    & $\Rk{\F_3}{\Fullt{6}}$     & $12$~\cite{lempel1977new,brockett1978optimal,solomon1965algebraically}    & $\mathbf{14}$ & $15$~\cite{barbulescu2012finding}                                 \\
    \midrule
    \multirow{4}{*}{Cyclic}
    & $\Rk{\F_2}{\Cyct{7}}$      & $12$~\cite{degroote1983characterization,blaser2005complete}                   & $\mathbf{13}$ & $13$~\cite{wagh1983structured}                               \\
    & $\Rk{\F_2}{\Cyct{8}}$      & $16$~\cite{averbuch1991classification,blaser2005complete}                     & $\mathbf{19}$ & $22$~\cite{cenk2011multiplication} \\
    & $\Rk{\F_2}{\Cyct{10}}$     & $19$~\cite{averbuch1988classification,blaser2005complete}                     & $\mathbf{22}$ & $27$~\cite{randriambololona2012bilinear}                        \\
    & $\Rk{\F_3}{\Cyct{9}}$      & $18$~\cite{averbuch1991classification,blaser2005complete}                     & $\mathbf{19}$ & $27$~\cite{cenk2011multiplication} \\
    \midrule
    \multirow{9}{*}{Truncated}
    & $\Rk{\F_2}{\Trunct{6}}$    & $12$~\cite{averbuch1991classification,blaser2005complete}                     & $\mathbf{13}$ & $14$~\cite{cenk2011multiplication}                           \\
    & $\Rk{\F_2}{\Trunct{7}}$    & $14$~\cite{averbuch1991classification,blaser2005complete}                     & $\mathbf{16}$ & $18$~\cite{cenk2011multiplication}                           \\
    & $\Rk{\F_2}{\Trunct{8}}$    & $16$~\cite{averbuch1991classification,blaser2005complete}                     & $\mathbf{19}$ & $22$~\cite{cenk2011multiplication}                           \\
    & $\Rk{\F_2}{\Trunct{9}}$    & $18$~\cite{averbuch1991classification,blaser2005complete}                     & $\mathbf{21}$ & $27$~\cite{cenk2011multiplication}$^\ast$                    \\
    & $\Rk{\F_2}{\Trunct{10}}$   & $20$~\cite{averbuch1991classification,blaser2005complete}                     & $\mathbf{24}$ & $31$~\cite{cenk2011multiplication}                           \\
    & $\Rk{\F_2}{\Trunct{11}}$   & $22$~\cite{averbuch1991classification,blaser2005complete}                     & $\mathbf{27}$ & $36$~\cite{cenk2011multiplication}                           \\
    & $\Rk{\F_3}{\Trunct{8}}$    & $16$~\cite{averbuch1991classification,blaser2005complete}                     & $\mathbf{17}$ & $22$~\cite{cenk2011multiplication}                           \\
    & $\Rk{\F_3}{\Trunct{9}}$    & $18$~\cite{averbuch1991classification,blaser2005complete}                     & $\mathbf{19}$ & $27$~\cite{cenk2011multiplication}$^\ast$                          \\
    & $\Rk{\F_3}{\Trunct{10}}$   & $20$~\cite{averbuch1991classification,blaser2005complete}                     & $\mathbf{21}$ & $31$~\cite{cenk2011multiplication}                                 \\
    \midrule
    Negacyclic
    & $\Rk{\F_3}{\Negat{9}}$     & $18$~\cite{averbuch1991classification,blaser2005complete}                     & $\mathbf{19}$ & $27$~\cite{cenk2011multiplication}                           \\
    \bottomrule
  \end{tabular}

  \smallskip
  {\footnotesize\raggedright
    $^\ast$ We obtain a new bilinear complexity upper bound of $26$ for multiplication over $\mathbb{Z}[x]/x^9$ by flip-graph search~\cite{kauers2023flip} over $\F_2$ and lifting to $\mathbb{Z}$. See Table~\ref{tab:trunc} for the decomposition.
  \par}
  \caption{The 21 new lower bounds (3 matrix, 18 polynomial) proved in this paper.}
  \label{tab:summary}
\end{table}

In the $\Cyct{7}$ over $\F_2$ case, the new lower bound meets the best known
upper bound, settling $\Rk{\F_2}{\Cyct{7}}=13$. The truncated product at length $9$
also yields a new \emph{upper} bound: a rank-$26$ decomposition for multiplication
over $\mathbb{Z}[x]/x^9$, found by a flip-graph search adapted from Kauers and Moosbauer~\cite{kauers2023flip} over
$\F_2$ and lifted to $\mathbb{Z}$ (Table~\ref{tab:trunc}).

Table~\ref{tab:summary} collects all new lower bounds (Theorem~\ref{thm:matrix} and
Theorem~\ref{thm:poly}), placing each against the previous best lower bound and the
best known upper bound. Tables~\ref{tab:matrix}--\ref{tab:nega} in
Appendix~\ref{app:bound-tables} set these in the full context of each family,
including the many small cases where our framework reproduces the exact rank already
known.
The source code for the search and the verifier, together with all certificates, is
available
\ifanon
in an anonymized archive for double-blind
review.\footnote{\url{https://zenodo.org/records/22036500?preview=1&token=eyJhbGciOiJIUzUxMiJ9.eyJpZCI6ImVlMDgyZjYyLTI3NGEtNGEzMi1hNzQzLWIzZjZjMGQzZWQzMCIsImRhdGEiOnt9LCJyYW5kb20iOiJmOWMxZWE5NTNmZTc3MTkxZjRjMWY5Y2RiNTA2ZTgzOSJ9.aFASM6aDU5gb-fyM6NYG64M_AcEQ6FOjk5cDxJ48WiOuqFPFFM9Vh5MLslo1cQxYo43tmIjOtDtQhFnnGokWOg}}
\footnote{The matrix-multiplication lower bound search code was written by the author, while
Claude Code was used to adapt it to the polynomial-multiplication families.}
\else
at \url{https://github.com/wcgbg/tensor-rank-lower-bound}.
\fi
Appendix~\ref{app:results} (Table~\ref{tab:results}) reports computation details behind each new bound.

Several questions remain open. The exact rank of $\mmt{3}{3}{3}$ over $\F_2$ still
lies in $[20,23]$. Scaling to $\mmt{4}{4}{4}$ is blocked by the sheer number of
constraint orbits and will need stronger symmetry reduction. For polynomial
multiplication, the full, cyclic, truncated, and negacyclic bounds should extend to
larger $N$ with more computation. Finally, a natural next family is multiplication in
the extension field $\F_{q^n}$ over $\F_q$, computing $f\cdot g$ modulo an
irreducible polynomial of degree $n$ over $\F_q$.

The rest of the paper is organized as follows. Section~\ref{sec:toy} sketches the
whole framework on a single toy example. Section~\ref{sec:framework} sets up
subspaces, constraints, and their symmetries, instantiated for the matrix and
polynomial families. The two stages of the search follow: Section~\ref{sec:orbits}
enumerates the orbits, and Section~\ref{sec:techniques} gives the dynamic program
and its four lower-bound techniques. Section~\ref{sec:certificates} then details the
certificates and the verifier.

\section{A toy example: \texorpdfstring{$\Rk{\F_2}{\mmt{2}{2}{2}}\ge 7$}{rank over F2 of <2,2,2> >= 7}}\label{sec:toy}

Before the formal development, we sketch how the framework proves an easy, long-known
case of Hopcroft and Kerr~\cite{hopcroft1971minimizing}:
$2\times 2$ matrix multiplication over $\F_2$ needs at
least $7$ multiplications.

The method works with versions of the tensor $\mmt{2}{2}{2}$ in which
the first argument $X\in A=\F_2^{2\times2}$ ranges over a
\emph{subspace} of $A$ rather than all of $A$.
The framework runs in two stages.

\subsection{Stage 1: orbit enumeration}

Symmetries of the tensor that act on the
first input preserve rank, so subspaces related by a symmetry yield
tensors of the same rank, and it suffices to bound one representative
per orbit. For matrix multiplication the symmetries are $X\mapsto PXQ^{-1}$ with
$P,Q$ invertible, and $X\mapsto X^{\top}$. For $2\times 2$ matrices over $\F_2$
the enumeration (Section~\ref{sec:orbits}) finds ten orbits of
subspaces, ordered by dimension from the zero subspace (Orbit 0) up
to all of $A$ (Orbit 9), whose tensor is $\mmt{2}{2}{2}$ itself.

\subsection{Stage 2: dynamic program}

The second stage assigns each of the ten
orbits a rank lower bound, working upward in subspace dimension so that
each step may rely on the bounds already settled for smaller subspaces. Each
bound comes from one of four techniques (Section~\ref{sec:techniques}): flatten,
degenerate reduction, forced product, and substitution with backtracking. We now
walk through all ten orbits, grouped by technique.

\paragraph{Orbits 0, 1, 2, 3, 5 (flatten).} The base technique reads a bound off a
\emph{flattening}: regroup the tensor's three factors as two, view the result as a
matrix, and take its rank. Orbit 0, $X=\left(
  \begin{smallmatrix}0&0\\0&0
\end{smallmatrix}\right)$, is the zero tensor, so
$\R=0$. Orbit 1, $X=\left(
  \begin{smallmatrix}0&0\\0&a_{11}
\end{smallmatrix}\right)$, flattens to rank $2$,
so $\R\ge 2$. Orbits 2, 3, and 5, with $X=\left(
  \begin{smallmatrix}0&a_{01}\\a_{01}&0
\end{smallmatrix}\right)$, $\left(
  \begin{smallmatrix}0&0\\a_{10}&a_{11}
\end{smallmatrix}\right)$, and $\left(
  \begin{smallmatrix}0&a_{01}\\a_{10}&0
\end{smallmatrix}\right)$, each flatten to
rank $4$, so $\R\ge 4$.

\paragraph{Orbit 4, $X=\left(
    \begin{smallmatrix}0&a_{01}\\a_{01}&a_{11}
\end{smallmatrix}\right)$ (forced product).}
Flattening yields only $4$. The forced-product technique automates a substitution of
Hopcroft and Kerr~\cite{hopcroft1971minimizing}: two of the tensor's slices are single
products that any optimal decomposition may be assumed to compute literally, which uses
up $2$ of its terms; stripping those two terms leaves a residual whose every
$\F_2$-completion flattens to rank at least $4$. Hence $\R\ge 2+4=6$.

\paragraph{Orbit 6, $X=\left(
    \begin{smallmatrix}a_{00}&a_{01}\\a_{01}&a_{00}+a_{01}
\end{smallmatrix}\right)$ (substitution with backtracking).}
Flattening again gives only $4$, so we substitute. Fix an optimal decomposition; its
first components are nonzero linear forms in $a_{00},a_{01}$, of which there are just
three ($a_{00}$, $a_{01}$, $a_{00}+a_{01}$). With at least $4$ terms, by pigeonhole one
form repeats, and setting it to zero kills at least $2$ terms and lands in Orbit~2
(bound $4$). Hence $\R\ge 2+4=6$. The pigeonhole shortcut is for exposition only;
the framework automates such substitution arguments with a backtracking search
(Section~\ref{sec:backtracking}).

\paragraph{Orbits 7, 8 (degenerate reduction).} Passing to a subspace one dimension
smaller lands in an orbit already settled, whose bound is then inherited. Orbit 7,
$X=\left(
  \begin{smallmatrix}0&a_{01}\\a_{10}&a_{11}
\end{smallmatrix}\right)$, with $a_{10}$
set equal to $a_{01}$, and Orbit 8, $X=\left(
  \begin{smallmatrix}a_{00}&a_{01}\\a_{01}&a_{11}
\end{smallmatrix}\right)$, with
$a_{00}$ set to $0$, both land in Orbit~4, so $\R\ge 6$.

\paragraph{Orbit 9, $X=\left(
    \begin{smallmatrix}a_{00}&a_{01}\\a_{10}&a_{11}
\end{smallmatrix}\right)$ (substitution with backtracking).}
Here the subspace is all of $A$, and the tensor is $\mmt{2}{2}{2}$ itself. Fix an optimal decomposition
and take any of its first components, a nonzero linear form in the four $a_{ij}$.
Setting it to zero kills at least one term and lands in Orbit~7 or Orbit~8, each of
bound $6$. Since this holds for \emph{every} nonzero form, $\R\ge 1+6$, that is,
$ \Rk{\F_2}{\mmt{2}{2}{2}}\ge 7$.

Appendix~\ref{sec:toy-in-detail} works this example in full, with each orbit's
tensor and the backtracking search tree.

\section{Subspaces and symmetries}\label{sec:framework}

Fix a finite field $\F_q$ with $q=p^m$. A multiplication problem is a bilinear
map
\[
  \beta\colon A\times B\longrightarrow C,
\]
where $A=\F_q^{N_A}$ and $B=\F_q^{N_B}$ are its two input spaces and
$C=\F_q^{N_C}$ is its output space. After choosing bases $a_i$ of $A^*$,
$b_j$ of $B^*$, and $c_k$ of $C$, its structure tensor is
\[
  T = \sum_{i,j,k} t_{ijk}\, a_i \otimes b_j \otimes c_k
  \;\in\; A^* \otimes B^* \otimes C,
\]
where the structure constants $t_{ijk}\in\F_q$ encode $\beta$. The \emph{rank}
$\R(T)$ is the least $r$ such that
$T=\sum_{\lambda=1}^{r} u_\lambda\otimes v_\lambda\otimes w_\lambda$ with
$u_\lambda\in A^*$, $v_\lambda\in B^*$, and $w_\lambda\in C$; this is exactly the bilinear
complexity of $\beta$. We write $\Rk{\F_q}{T}$ when emphasizing the field $\F_q$,
and $\R(T)$ when it is clear from context.

We attack $\R(T)$ by restricting the first argument to a subspace $S\subseteq A$.
If $\operatorname{res}_S\colon A^*\to S^*$ sends a linear form to its restriction,
then the restricted tensor is
\[
  T_S=(\operatorname{res}_S\otimes\mathrm{id}_{B^*}\otimes\mathrm{id}_C)(T)
  \in S^*\otimes B^*\otimes C,
\]
and $\R(T)\ge\R(T_S)$.
Subspaces are described by \emph{constraints}: a constraint is a linear
functional on $A$, i.e.\ a vector in the dual $A^*\cong\F_q^{N_A}$, and a set of
constraints cuts $A$ down to the subspace on which all of them vanish.
Throughout, $d$ denotes the dimension of a subspace $S\subseteq A$ and
$d^{\perp}=N_A-d$ its codimension: $S$ needs $d^{\perp}$ independent constraints,
and all constraints vanishing on $S$ form a $d^{\perp}$-dimensional space, the
annihilator $S^{\perp}\subseteq A^*$. We represent $S$ by the reduced row echelon form of a
basis of $S^{\perp}$, which is a unique key for $S$; the
implementation manipulates subspaces exclusively through these constraint RREFs.

\subsection{Symmetries}

The leverage in our method comes from symmetries that act on $A$ and preserve
$\R(T)$. Abstractly, let $G$ be a group acting on $A$ by semilinear
bijections: each $g$ is additive with $g(\lambda u)=\phi_g(\lambda)\,g(u)$ for
some field automorphism $\phi_g\in\Gal(\F_q/\F_p)$, so both the $\F_q$-linear
sandwich and multiplication maps ($\phi_g=\mathrm{id}$) and the Galois symmetries
($\phi_g$ a Frobenius $a\mapsto a^p$) are allowed. We require that every
$g\in G$ extends to a rank-preserving symmetry of $T$; its action on the first
tensor factor $A^*$ is the contragredient of its action on $A$. Since a semilinear bijection sends $\F_q$-subspaces to
$\F_q$-subspaces, $G$ permutes the subspaces of $A$ and maps each $T_S$ to an
isomorphic tensor, so $\R(T_S)$ is constant on each $G$-orbit of subspaces. On
constraint vectors $G$ acts by the dual action, again semilinear, which is how
the implementation applies a group element to a stored subspace. Our
dynamic program needs only one representative per orbit.

The two facts the dynamic program rests on are the following.

\begin{lemma}[Monotonicity and orbit invariance]\label{lem:foundation}
  If $S'$ is a subspace of $S$, then $\R(T_{S})\ge\R(T_{S'})$. Moreover, for any rank-preserving
  first-argument symmetry $g\in G$ we have $\R(T_{S})=\R(T_{g(S)})$, so $\R(T_S)$
  depends only on the $G$-orbit of the subspace.
\end{lemma}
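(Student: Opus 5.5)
The plan is to prove the two parts separately, each by pushing a rank-$r$ decomposition of one tensor through a suitable linear map to obtain a decomposition of the other with at most $r$ terms.

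\emph{Monotonicity.} Let $S'\subseteq S$. Restriction factors through $S$: $\operatorname{res}_{S'}=\operatorname{res}^{S}_{S'}\circ\operatorname{res}_S$, where $\operatorname{res}^{S}_{S'}\colon S^*\to S'^*$ restricts a functional on $S$ to $S'$. Hence
\[
T_{S'}=(\operatorname{res}^{S}_{S'}\otimes\mathrm{id}\otimes\mathrm{id})(T_S).
\]
Take a decomposition $T_S=\sum_{\lambda=1}^{r}u_\lambda\otimes v_\lambda\otimes w_\lambda$ with $r=\R(T_S)$. Applying the map term by term gives $T_{S'}=\sum_\lambda \operatorname{res}^{S}_{S'}(u_\lambda)\otimes v_\lambda\otimes w_\lambda$. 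This is a decomposition with at most $r$ terms; terms whose first factor becomes zero simply vanish. So $\R(T_{S'})\le\R(T_S)$. The same argument with $S=A$ gives $\R(T)\ge\R(T_S)$.

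\emph{Orbit invariance.} By hypothesis each $g\in G$ extends to a rank-preserving symmetry of $T$. Concretely, I read this as a triple of semilinear bijections $(g^{-*},h,k)$ on $A^*,B^*,C$, all with the same field automorphism $\phi_g$, whose induced map $\Phi_g$ on $A^*\otimes B^*\otimes C$ fixes $T$, up to whatever the paper intends by "rank-preserving". Here $g^{-*}$ is the contragredient action. The key identity to check is
\[
\Phi_g(T_S)\cong T_{g(S)}.
\]
For this I will use the commuting square $\operatorname{res}_{g(S)}\circ g^{-*}=(g|_S)^{-*}\circ\operatorname{res}_S$ of maps $A^*\to g(S)^*$. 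It holds because, for $u\in A^*$ and $s\in S$, the form $g^{-*}u$ evaluated at $g(s)$ equals $\phi_g(u(s))$.

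Applying $\Phi_g$ to $T=\Phi_g(T)$ and then restricting to $g(S)$ gives
\[
T_{g(S)}=\bigl((g|_S)^{-*}\otimes h\otimes k\bigr)(T_S),
\]
where the right-hand side is computed coordinatewise. The map on the right is a semilinear bijection on each factor, with the common automorphism $\phi_g$. Such a map sends pure tensors to pure tensors, and $\phi_g$-twisted tensor products of semilinear maps are well defined. So it sends a rank-$r$ decomposition of $T_S$ to one of $T_{g(S)}$, giving $\R(T_{g(S)})\le\R(T_S)$. Applying the same argument to $g^{-1}$ and the subspace $g(S)$ gives the reverse inequality. Since every element of the orbit $G\cdot S$ is of the form $g(S)$, the rank is constant on orbits.

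\emph{Main obstacle.} The only delicate point is the semilinear case. One must check that the twisted tensor product $\sigma_1\otimes\sigma_2\otimes\sigma_3$ of $\phi$-semilinear maps is well defined on the $\F_q$-tensor product: it is $\phi$-semilinear and satisfies $\lambda u\otimes v\mapsto\phi(\lambda)\sigma_1u\otimes\sigma_2v$ consistently. This follows by viewing it as the $\F_p$-linear map $\sigma_1\otimes_{\F_p}\sigma_2\otimes_{\F_p}\sigma_3$ and verifying that it descends along the quotient to $\otimes_{\F_q}$. Doing this requires that all three factors share the same automorphism $\phi_g$; I will state this explicitly as part of what "rank-preserving symmetry" means. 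The commuting square above also needs this check to go through.
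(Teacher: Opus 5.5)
Your proof is correct and takes the same route as the paper's. Monotonicity comes from pushing a decomposition of $T_S$ through the restriction $S^*\to S'^*$, and invariance from transporting a decomposition along the invertible change of variables that carries $T_S$ to $T_{g(S)}$, which you work out (commuting square, semilinear twist) in more detail than the paper's two-line argument. One small point about your concrete model: for square matrix formats, the transpose symmetry extends to $T$ only together with a swap of the $B^*$ and $C$ factors (reflecting $\operatorname{tr}(XYZ)=\operatorname{tr}(X^{\top}Z^{\top}Y^{\top})$), so it is not a factorwise triple $(g^{-*},h,k)$. Your argument still goes through verbatim, because that map acts only on the last two factors and sends rank-one tensors to rank-one tensors.
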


\begin{proof}
  A bilinear algorithm with $r$ products that computes the map of $T_S$ also computes
  its restriction to any smaller first-argument subspace with the same $r$ products, so
  imposing further constraints cannot raise the rank; this is monotonicity. For
  invariance, a rank-preserving symmetry extending the first-input action $g$ carries
  $T_S$ to $T_{g(S)}$ by an invertible change of variables, which leaves the rank
  unchanged.
\end{proof}

\subsection{The query/store factorization}\label{sec:query-store}

The implementation never materializes $G$ as a list. Instead it factors the group $G$
into two subsets of $G$, a \emph{query} set $\Kappa$ and a \emph{store} set
$\Sigma$, such that the product set $\Kappa^{-1}\cdot\Sigma$ covers $G$ so that every
orbit member is reached. Neither $\Kappa$ nor $\Sigma$ needs to be a subgroup or be
closed under inversion. This is the meet-in-the-middle, or square-root, trick. To
test whether a query subspace $S$ lies in the orbit of a stored representative
$c$, we precompute and hash the RREF of $\sigma(c)$ for every
$\sigma\in\Sigma$ (the \emph{store images}), and at query time probe the hash with the
RREF of $\kappa(S)$ for every $\kappa\in\Kappa$. A hit $\kappa(S)\equiv \sigma(c)$
certifies $S$ and $c$ as $G$-equivalent and yields the witness pair
$(\kappa,\sigma)$, from which the canonical form is recovered as
\begin{equation}\label{eq:witness}
  c \;\equiv\; \sigma^{-1}\bigl(\kappa(S)\bigr).
\end{equation}
Storage grows by a factor $|\Sigma|$ and a query costs $|\Kappa|$
probes, with $|\Kappa|\cdot|\Sigma|\gtrsim |G|$; choosing
$|\Kappa|\approx|\Sigma|\approx\sqrt{|G|}$ trades square-root space for
square-root time.

The framework is fixed once we name, for each problem, the tensor $T$ and the
factorization $(\Kappa,\Sigma)$ of its first-argument symmetry group; the next
three subsections do so.

\subsection{Matrix multiplication}\label{sec:matrix}

Let
\[
  A=\F_q^{l\times m},\qquad B=\F_q^{m\times n},\qquad
  C=\F_q^{l\times n}.
\]
For matrices $X\in A$ and $Y\in B$, write $Z=XY\in C$. Let $x_{ij}\in A^*$
and $y_{jk}\in B^*$ be the coordinate functionals, and let $e_{ik}\in C$ be
the matrix units. The tensor for matrix multiplication is
\[
  \mmt{l}{m}{n} = \sum_{i=0}^{l-1}\sum_{j=0}^{m-1}\sum_{k=0}^{n-1}
  x_{ij}\otimes y_{jk}\otimes e_{ik}
  \in A^*\otimes B^*\otimes C.
\]
For invertible $P\in\GL_l$, $Q\in\GL_m$, and $R\in\GL_n$, the changes
$X\mapsto PXQ^{-1}$, $Y\mapsto QYR^{-1}$, and $Z\mapsto PZR^{-1}$ give a
rank-preserving isomorphism of the multiplication map (the \emph{sandwich}
symmetry). The rank is also invariant under
cyclically permuting the three factors, $\R(\mmt{l}{m}{n})=\R(\mmt{m}{n}{l})$ (the
\emph{cyclic} symmetry). For a square format $l=m=n$, factor permutation and
matrix transposition also give an order-two rank-preserving symmetry whose
action on the first input is $X\mapsto X^{\!\top}$ (the \emph{transpose}
symmetry). Projected onto the first input, the symmetry group acting on
$A=\F_q^{l\times m}$ is
\[
  G \;=\; (\GL_l(\F_q)\times \GL_m(\F_q)) \rtimes C_2,
\]
with $(P,Q)$ acting by $X\mapsto PXQ^{-1}$ and the $C_2$ by transposition $X\mapsto X^{\!\top}$ (present
only when $l=m=n$). The meet-in-the-middle splits $G$ by side:
the store set is the right multiplications $\Sigma=\{X\mapsto XQ^{-1}:Q\in\GL_m(\F_q)\}$ and
the query set is the left multiplications $\Kappa=\{X\mapsto LX:L\in\GL_l(\F_q)\}$,
extended by the transpose when $l=m=n$. Scalar matrices act trivially on subspaces,
so each side is taken modulo the center, and both then have order about $\sqrt{|G|}$.
Equivalence of dimension-$d$ subspaces under the subgroup without transpose is
exactly the tensor-isomorphism problem on $l\times m\times d$ tensors (stack the
$d$ basis matrices as slices), which sits above graph
and code equivalence in the isomorphism hierarchy~\cite{grochow2023complexity}.

\subsection{Full polynomial multiplication}

Let $A=B=\F_q[x]_{<N}$ be the two input spaces and
$C=\F_q[x]_{<2N-1}$ the output space. With $a_i\in A^*$ and $b_j\in B^*$ the
coefficient functionals and $c_k\in C$ the monomial basis, multiplying two
polynomials of degree $<N$ gives
\[
  \Fullt{N} = \sum_{i,j=0}^{N-1} a_i\otimes b_j\otimes c_{i+j},
  \qquad \Fullt{N}\in A^*\otimes B^*\otimes C,
\]
the ordinary convolution. Its first-argument symmetry group is the projective
semilinear group
\[
  G_{\mathrm{full}} = \mathrm{P\Gamma L}_2(\F_q)
  = \PGL_2(\F_q)\rtimes \Gal(\F_q/\F_p),
\]
acting on degree-$(N-1)$ binary forms and generated by the substitutions
$x\mapsto x+c$, the reversal $x\mapsto 1/x$ (which
swaps $a_i\leftrightarrow a_{N-1-i}$), the scalings $x\mapsto gx$, and the
Frobenius $a\mapsto a^p$. We factor it as
$\Sigma=\PGL_2(\F_q)$ (the store side, of size $q(q^2-1)$) and
$\Kappa=\Gal(\F_q/\F_p)$ (the query side, of size $m$).

\subsection{Polynomial multiplication modulo \texorpdfstring{$x^N-\gamma$}{x to the N minus gamma}}

Fix a scalar $\gamma\in\F_p\subseteq\F_q$ and let
$R=\F_q[x]/(x^N-\gamma)$. This assumption includes the three cases
$\gamma\in\{0,1,-1\}$ considered below and ensures that every element of
$\Gal(\F_q/\F_p)$ fixes $\gamma$. Take the two input spaces and the output space
to be three copies $A,B,C$ of $R$, let $a_i\in A^*$
and $b_j\in B^*$ be the coefficient functionals, and let $c_k\in C$ be the
monomial basis. Multiplication in $R$ reduces the convolution by the rule
$x^N\equiv\gamma$,
\[
  \sum_{i,j=0}^{N-1} a_i\otimes b_j\otimes w_{ij},\qquad
  w_{ij}=
  \begin{cases} c_{i+j},           & i+j<N,    \\[2pt]
    \gamma\,c_{i+j-N}, & i+j\ge N,
  \end{cases}
  \qquad T\in A^*\otimes B^*\otimes C.
\]
Three families specialize this tensor: \emph{cyclic} convolution $\Cyct{N}$ for
$\gamma=1$, the \emph{truncated} (or short) product $\Trunct{N}$ for $\gamma=0$
(the local algebra $\F_q[x]/(x^N)$, in which the overflow terms vanish and $x$ is
nilpotent), and the \emph{negacyclic} product $\Negat{N}$ for $\gamma=-1$. Over
characteristic $2$, $-1=1$, so $\Negat{N}=\Cyct{N}$.

The same three kinds of symmetry act on $A=R$ for every such $\gamma$: multiplication by a
unit $u\in R^*$ (the shifts $x^r$ are the special case $u=x^r$, valid for
$\gamma\ne 0$), the $\F_q$-algebra automorphisms $\Aut(R)$, and the Galois group of
$\F_q/\F_p$. The base-field scalars $\F_q^*$ act trivially on subspaces, so the group
acting on subspaces is
\[
  G = \bigl(R^*\rtimes \Aut(R) \rtimes \Gal(\F_q/\F_p)\bigr)/\F_q^*,
\]
with store side $\Sigma=R^*/\F_q^*$ and query side
$\Kappa=\Aut(R)\rtimes\Gal(\F_q/\F_p)$, the identical factorization in all three
cases. Only the ring structure, and hence the sizes of the two sides, depends on
$\gamma$.
For $\gamma=\pm1$ the structure of $R$ depends on whether $p\mid N$. If $p\nmid N$,
then $x^N-\gamma$ is squarefree and the Chinese remainder theorem splits $R$ into
field factors (for $\gamma=1$ indexed by the $q$-cyclotomic cosets of
$\mathbb{Z}/N$, for $\gamma=-1$ by the cosets of the odd residues modulo $2N$). If
$p\mid N$, write $N=N'p^s$ with $p\nmid N'$; then $x^N-\gamma=(x^{N'}-\gamma)^{p^s}$
and $R$ splits into local rings rather than fields. Either way $R^*$ is far larger
than the $N$ shifts, which is precisely the extra symmetry the framework exploits.
The implementation never uses this decomposition: it enumerates $R^*$ as the
elements whose multiplication matrix is invertible and $\Aut(R)$ as the elements $y$ with
$y^N=\gamma$ and $\F_q[y]=R$, both valid for every such $\gamma$ and $N$.
For $\gamma=0$ the algebra is local: a polynomial is a unit iff
its constant term is nonzero, so $|\Sigma|=q^{N-1}$, and an automorphism sends $x$ to
any nilpotent generator $y_1x+\dots+y_{N-1}x^{N-1}$ with $y_1\ne 0$, giving
$|\Aut(R)|=(q-1)q^{N-2}$.

\medskip

Table~\ref{tab:symmetry} contrasts the symmetry structures; the three quotient
families share one row, differing only in the constant $\gamma$.

\begin{table}[h]
  \centering
  \setlength{\tabcolsep}{6pt}
  \begin{tabular}{lll}
    \toprule
    problem                            & store $\Sigma$        & query $\Kappa$                                \\
    \midrule
    matrix-mult $\mmt{l}{m}{n}$        & $\GL_m$ (right mult.) & $\GL_l$ (left mult.), and transpose ($l=m=n$) \\
    full-poly-mult                     & $\PGL_2(\F_q)$        & $\Gal(\F_q/\F_p)$                             \\
    poly-mult $\F_q[x]/(x^N{-}\gamma)$ & $R^*/\F_q^*$          & $\Aut(R)\rtimes\Gal(\F_q/\F_p)$               \\
    \bottomrule
  \end{tabular}
  \caption{First-argument symmetry factorizations; the last row covers the cyclic
    ($\gamma=1$), truncated ($\gamma=0$), and negacyclic ($\gamma=-1$) products. In every family the
    query and store elements are numbered consecutively, so a certificate can name
  a symmetry by a single integer.}
  \label{tab:symmetry}
\end{table}

\section{Orbit enumeration}\label{sec:orbits}

The first stage enumerates exactly one canonical representative per orbit of
subspaces of $A$. Subspaces are cut down one dimension at a time
and deduplicated with the same meet-in-the-middle
trick (Section~\ref{sec:query-store}).

Representatives are generated codimension by codimension, in lexicographic
order of their RREF in the dual $A^*$. A codimension-$d^{\perp}$ subspace is formed by
appending to a codimension-$(d^{\perp}-1)$ representative one new constraint whose leading
pivot lies strictly above the existing pivots; candidates with a nonzero
entry at an existing pivot are skipped because Gauss--Jordan elimination would
reduce them against the current basis. Because candidates are visited in global
lexicographic order, the first time an orbit is seen its representative is
automatically the lexicographically least, so no separate minimality test is
needed.

Deduplication uses a hash set of RREF keys.
For each kept representative $c$ we insert the RREF of every store image
$\sigma(c)$, $\sigma\in\Sigma$; for each new candidate $S$ we probe with
the RREF of every query image $\kappa(S)$, $\kappa\in\Kappa$, and
discard $S$ if any probe hits, since a hit means $S\equiv (\kappa^{-1}\sigma)(c)$
lies in $c$'s orbit.
Algorithm~\ref{alg:enum}, in Appendix~\ref{app:enum}, states the procedure.

\section{Dynamic program with four lower-bound techniques}\label{sec:techniques}

The second stage is a dynamic program that sweeps subspace
dimension from $0$ (zero subspace) to $N_A$ ($A$ itself); this is the reverse of
the enumeration order of Section~\ref{sec:orbits}. At
each dimension it processes all orbits of that
dimension in parallel, then inserts their freshly computed bounds into the orbit
map so that the next (larger) dimension can consult them. This ordering is sound
because the two techniques that look anything up (degenerate reduction and
substitution with backtracking) only ever query \emph{strictly smaller}
subspaces, which have already been settled. Each orbit tries four techniques in
this order: flatten, degenerate reduction, forced product, and substitution with
backtracking; later techniques run only if the earlier ones have not already met
the goal. The order is roughly one of increasing cost, and the last technique
(Section~\ref{sec:backtracking}) is both the most powerful and the main novelty
of the framework.

\subsection{Flatten}\label{sec:flatten}

The flatten technique computes the three flattening matrix ranks of the
constrained tensor and takes the maximum; it is the base case and needs no other
orbit.

\subsection{Degenerate reduction}\label{sec:degenerate}

Degenerate reduction imposes one more independent constraint on the current
subspace, producing a strictly smaller subspace whose orbit was processed
at the previous dimension; by Lemma~\ref{lem:foundation} its bound is a
valid bound for the current orbit. We scan candidate constraints, look up each
such smaller subspace in the orbit map, and keep the best.

\subsection{Forced product}\label{sec:forced-product}

The forced-product technique automates the Hopcroft--Kerr
substitution~\cite{hopcroft1971minimizing}, which rests on the following lemma. The
lemma is their Lemma~2; we reproduce its short proof for completeness.

\begin{lemma}[Hopcroft and Kerr~\cite{hopcroft1971minimizing}]\label{lem:forced-product}
  Let $f_0,\dots,f_{t-1}$ be bilinear forms of which $f_0,\dots,f_{s-1}$ are linearly
  independent and each is a single product (a form of rank $1$). If the whole set can
  be computed with $r$ multiplications, then it can be computed with $r$
  multiplications, $s$ of which are exactly $f_0,\dots,f_{s-1}$.
\end{lemma}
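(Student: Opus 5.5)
The plan is the classical exchange argument. Fix a bilinear algorithm with $r$ products $p_1,\dots,p_r$ (each $p_\lambda$ a product of a linear form in the $A$-variables and one in the $B$-variables) and coefficients $\mu_{k\lambda}\in\F_q$ with $f_k=\sum_{\lambda}\mu_{k\lambda}p_\lambda$ for every $k<t$. I will replace the products one at a time by $f_0,\dots,f_{s-1}$, keeping the number of products equal to $r$ and keeping every $f_k$ in the span of the current products. The induction is on $j\le s$, with hypothesis: there is an algorithm with $r$ products whose first $j$ products are literally $f_0,\dots,f_{j-1}$ and which still computes all of $f_0,\dots,f_{t-1}$ as linear combinations of its products. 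The base case $j=0$ is the given algorithm.

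For the inductive step from $j$ to $j+1$ with $j<s$, write $f_j=\sum_{\lambda\le j}\mu_{j\lambda}f_{\lambda-1}+\sum_{\lambda>j}\mu_{j\lambda}p_\lambda$, where the first sum runs over the products already equal to $f_0,\dots,f_{j-1}$. Since $f_0,\dots,f_{s-1}$ are linearly independent, $f_j$ is not in the span of $f_0,\dots,f_{j-1}$. Hence some $\lambda_0>j$ has $\mu_{j\lambda_0}\ne0$. Solving this relation for $p_{\lambda_0}$ expresses it as a linear combination of $f_j$ and the other current products. Replace $p_{\lambda_0}$ by the new product $f_j$, which is a legitimate product because $f_j$ has rank $1$, i.e. it is literally $u\cdot v$ for linear forms $u$, $v$. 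Then reorder so that this product sits in position $j+1$. Every $f_k$ was a combination of the old products, and $p_{\lambda_0}$ is now a combination of the new ones. So every $f_k$ is still a linear combination of the new $r$ products, and the hypothesis holds for $j+1$. After $s$ steps we get the claimed algorithm.

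The only delicate point, and the step I would check carefully, is ensuring the exchanged index $\lambda_0$ lies outside the already-fixed products. That is exactly where linear independence of $f_0,\dots,f_{s-1}$ is used: if all nonzero coefficients sat on the fixed products, $f_j$ would depend linearly on $f_0,\dots,f_{j-1}$. The rank-$1$ hypothesis is needed only so that each inserted $f_j$ counts as a single multiplication. Linear combinations are taken over $\F_q$, which is all the bilinear model requires.
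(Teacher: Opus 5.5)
Your proposal is correct and is essentially the paper's own proof: you exchange products one at a time, use linear independence of $f_0,\dots,f_{s-1}$ to find a not-yet-exchanged original product with nonzero coefficient in the expression for $f_j$, and use the rank-$1$ hypothesis to make $f_j$ a legal product. One small notational fix: after the first exchange, the coefficients expressing $f_j$ in the current products are no longer the original $\mu_{j\lambda}$, so introduce fresh coefficients at each step; this does not affect the argument.
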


\begin{proof}
  Take any bilinear algorithm with products $p_0,\dots,p_{r-1}$ computing the $f_i$,
  each a fixed $\F_q$-linear combination of the products. We exchange products for the
  $f_i$ one at a time, maintaining the invariant that after step $i$ the multiset of
  products is $f_0,\dots,f_{i-1},$ together with $r-i$ of the original $p_j$, and that
  its span still contains every one of $f_0,\dots,f_{t-1}$.
  At step $i$, write $f_i$ in terms of the current
  products. Since $f_i$ is linearly independent of $f_0,\dots,f_{i-1}$, some
  \emph{original} product $p_j$ (not one of the already-exchanged $f_0,\dots,f_{i-1}$)
  must occur in this expression with a nonzero coefficient $\alpha$. Then
  $p_j=\alpha^{-1}(f_i-\text{rest})$ lies in the span of $f_i$ and the other current
  products, so replacing $p_j$ by $f_i$ does not shrink the span; and $f_i$, being of
  rank $1$, is itself a legal product. After $s$ steps the algorithm has $r$ products,
  $s$ of which are exactly $f_0,\dots,f_{s-1}$.
\end{proof}

Group the tensor along the $C$ factor into $t$ slices; the slices whose
$A^*\otimes B^*$ part has matrix rank $1$ are single products. We greedily take a
maximal linearly independent set of $s$ of them and, by
Lemma~\ref{lem:forced-product}, assume an optimal decomposition computes them
literally. Stripping these $s$ terms leaves $s(t-s)$ unknown field coefficients in
the residual; we enumerate all $q^{s(t-s)}$ assignments, flatten each residual, and
take the minimum, so the bound is $s$ plus that minimum. We repeat for all three
cyclic orientations of the tensor and keep the best. For soundness the minimum must
range over \emph{every} assignment, each coefficient running over all $q$ values; we
skip the technique when $q^{s(t-s)}$ is too large to enumerate. The proof records only which of the three
orientations won. Note that Lemma~\ref{lem:forced-product} holds over any field; it
is the exhaustive enumeration of the $q^{s(t-s)}$ residuals that makes the technique
\emph{effective} only over a small finite field.

\subsection{Substitution with backtracking}\label{sec:backtracking}

The substitution method is a powerful inductive technique in bilinear complexity
lower bound proofs, introduced by Pan~\cite{pan1966methods}.
We state the iterated form the search uses and include
its short proof for completeness.

\begin{lemma}[Substitution]\label{lem:substitution}
  Fix an optimal decomposition $T_S=\sum_{\lambda}u_\lambda\otimes v_\lambda\otimes
  w_\lambda$ with $\R(T_S)$ terms, where $u_\lambda\in S^*$ is nonzero for every
  $\lambda$. Suppose that repeatedly setting a linear form to zero
  (each step imposing that form as a new constraint and discarding the terms whose
  first component vanishes on the smaller subspace) removes $\delta$ terms in total and
  reaches the more-constrained tensor $T_{S'}$. Then $\R(T_S)\ge\delta+\R(T_{S'})$.
\end{lemma}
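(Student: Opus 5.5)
The plan is to restrict the fixed optimal decomposition of $T_S$ to the smaller subspace $S'$ and count the terms that survive. Let $S'\subseteq S$ be the subspace cut out by the successive linear forms $\ell_1,\dots,\ell_k\in S^*$. The restriction map $\operatorname{res}_{S'}\colon S^*\to S'^*$ is linear, so applying $\operatorname{res}_{S'}\otimes\mathrm{id}_{B^*}\otimes\mathrm{id}_C$ to the identity $T_S=\sum_\lambda u_\lambda\otimes v_\lambda\otimes w_\lambda$ gives
\[
  T_{S'}=\sum_{\lambda}\operatorname{res}_{S'}(u_\lambda)\otimes v_\lambda\otimes w_\lambda .
\]
Here I use that restricting in stages, $A^*\to S^*\to S'^*$, is the same as restricting directly from $A^*$ to $S'^*$. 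This is what makes $T_{S'}$ the restriction of $T_S$.

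The terms discarded along the way are exactly those with $\operatorname{res}_{S'}(u_\lambda)=0$. I will check that the iterative bookkeeping agrees with this. At each step, the terms removed are those whose first component vanishes on the current smaller subspace. Restriction is transitive, so a term removed at step $i$ still restricts to zero on $S'$. Conversely, every term with $\operatorname{res}_{S'}(u_\lambda)=0$ is removed at the latest at the final step. The removed sets at different steps are disjoint, so they partition the set of zero terms, and $\delta$ is the number of $\lambda$ with $\operatorname{res}_{S'}(u_\lambda)=0$.

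Dropping these zero terms leaves a decomposition of $T_{S'}$ into $\R(T_S)-\delta$ rank-one terms. Each surviving term has a nonzero first component, which keeps the hypothesis in force for the next iteration, although the count does not need it. Hence $\R(T_{S'})\le \R(T_S)-\delta$, which is the claim.

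There is no real obstacle. The only point needing care is the bookkeeping just described: $\delta$ must count each discarded term exactly once, and "vanishes on the smaller subspace" must be read relative to the current subspace at each step. A single-step argument applied repeatedly would also work: each step gives $\R(T_{S_{i}})\ge\delta_i+\R(T_{S_{i+1}})$, and telescoping yields the result. For this it suffices that the decomposition carried forward after each step has $\R(T_{S_i})-\delta_i$ terms. It need not be optimal, because we only need upper bounds on later ranks, so I would phrase the argument through the direct restriction to avoid this subtlety.
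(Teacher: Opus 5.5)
Your proof is correct and is essentially the paper's argument: you restrict the fixed optimal decomposition to the smaller subspace, drop the terms whose first component vanishes there, and read off a decomposition of $T_{S'}$ with $\R(T_S)-\delta$ terms. Restricting once to $S'$, after checking that the stepwise discards are exactly the terms vanishing on $S'$, is a tidy way to carry out the paper's ``iterate and sum'' step, because it uses only the term count of the carried-forward decomposition and never its optimality. That is also why your aside about telescoping the per-step inequalities $\R(T_{S_i})\ge\delta_i+\R(T_{S_{i+1}})$ would not literally go through, and you are right not to rely on it.
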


\begin{proof}
  Setting a linear form to zero constrains the first argument to the hyperplane of the
  current subspace on which the form vanishes. The terms whose first component vanishes
  on that hyperplane restrict to zero and drop out; the
  surviving terms remain a valid decomposition of the constrained tensor. So a single
  step that discards $e$ terms gives $\R(T_S)\ge e+\R(T_{S''})$ for the one-smaller
  subspace $S''$. Iterating along the chain and summing the discarded terms yields the
  claim.
\end{proof}

So if a chain of substitutions removing $\delta$ terms lands in an orbit of
already-certified bound $b$ with $\delta+b\ge$ target, then $\R(T_S)\ge$ target. In
the hand proofs the art lies in arguing that a good chain must exist, typically by
a pigeonhole argument on the first components of an optimal decomposition.
Substitution with backtracking replaces this ingenuity with an exhaustive case analysis
over those components, organized as a depth-first search in which \emph{every}
branch must reach the target.

\paragraph{Canonical forms.} Let $d$ be the dimension of the current
subspace $S\subseteq A$. Substituting a form is unaffected by scaling it or by
adding to it a constraint already in $S^{\perp}$, so call two forms \emph{equivalent} when
they are proportional modulo $S^{\perp}$. The technique enumerates one canonical form
per class: the vectors of $A^*$ that have a zero in every pivot column of the
stored $\RREF(S^\perp)$, normalized to leading coefficient $1$. There are $(q^d-1)/(q-1)$ of
them; fix an order $f_0,f_1,\dots$. By the hypothesis of
Lemma~\ref{lem:substitution}, every term of an optimal decomposition of $T_S$ has
a first component equivalent to exactly one $f_i$, and imposing the constraint
$f_i=0$ kills every term whose component is equivalent to $f_i$.

\paragraph{The search tree.} Suppose the earlier techniques have certified
$\R(T_S)\ge k$ and the target is $k+1$. Toward a contradiction, assume
$\R(T_S)=k$. An optimal decomposition then has exactly $k$
terms; list the canonical forms of its first components in non-decreasing index
order, so that every possible multiset of components yields exactly one list.
The DFS enumerates the prefixes of this unknown list: a node at
depth $j$ is a chain $f_{i_1},\dots,f_{i_j}$ with $i_1\le\dots\le i_j$,
representing the case that $j$ distinct terms of the decomposition have these
components. At each node the technique tries every subset of the chain that
contains the newest form $f_{i_j}$ (subsets without it were tried at an
ancestor). Imposing the subset's distinct forms as extra constraints kills at
least one term per chain entry in the subset (entries repeating a form are
distinct terms killed by the same substitution) and lands in the smaller subspace
of $S$ on which the subset's forms also vanish, and
hence already processed by the sweep; one orbit-map query retrieves its certified bound $b$ together with
the witness pair of Equation~\eqref{eq:witness}, and
Lemma~\ref{lem:substitution} gives $\R(T_S)\ge s+b$, where $s$ is the number of
chain entries in the subset. If some subset reaches $s+b\ge k+1$, the node is a
\emph{successful leaf} and the branch closes; otherwise the DFS deepens, and
since the next entry of the list can be any $f_i$ with $i\ge i_j$, \emph{all}
such children must close. Chains never grow past length $k$, the length of the
whole list, which caps the DFS depth. If every branch closes, then the true
decomposition's own prefix chain passes through some successful leaf; that leaf's
subset consists of components of genuinely distinct terms, so its inequality
yields $\R(T_S)\ge k+1$, contradicting $\R(T_S)=k$. Hence $\R(T_S)\ge k+1$. The
technique then re-runs with the target raised by one until it fails and keeps the
last success; aiming each run only one unit higher lets branches close at shallow
depths. Note how a form repeated in a chain makes a single substitution remove
several terms: the enumeration rediscovers, mechanically, the pigeonhole
arguments of the hand proofs. Orbit~6 of Appendix~\ref{sec:toy-in-detail}
displays a complete search tree, every branch of which closes at a chain with a
repeated form.

\paragraph{Scaling.} Three optimizations make the search practical, and none touches
the exhaustiveness of the case analysis. The top level of the DFS (the choice of
$f_{i_1}$) runs in parallel, and the whole technique is abandoned for the orbit
as soon as any branch fails. A thread-local cache memoizes the orbit-map answer
for each smaller subspace, which recurs across many chains; when the cache
outgrows its memory budget it is halved stochastically. And a global step limit
aborts oversized searches; an aborted branch counts as failed, so the limit
trades away completeness of the \emph{search}, never soundness of a reported
bound. Whenever the technique fails, the dynamic program falls back to the best
bound from the other techniques. On success, every successful leaf contributes
one record to the proof trace; Section~\ref{sec:certificates} describes the
trace format and its replay by the verifier.

\section{Proof certificates and verification}\label{sec:certificates}

Every bound is emitted as a certificate that a verifier rechecks.
The verifier is simpler than the search program, and it runs faster and uses less memory.

For each orbit, the certificate records
its canonical subspace (as the constraint RREF), the certified lower bound, and a
proof by one of the four techniques.
The bound for the unconstrained orbit is ``the'' bound for the problem.
Each proof stores only what the
verifier cannot cheaply recompute:
\begin{itemize}
  \item \textbf{Flatten:} nothing; the verifier recomputes the flattening rank.
  \item \textbf{Forced product:} the winning cyclic orientation, an integer in
    $\{0,1,2\}$.
  \item \textbf{Degenerate reduction:} the extra constraint, together with the witness pair
    $(\kappa,\sigma)$ of Equation~\eqref{eq:witness} that identifies the orbit of
    the resulting smaller subspace.
  \item \textbf{Substitution with backtracking:} one record per successful leaf of the DFS of
    Section~\ref{sec:backtracking}, in depth-first order: the leaf's depth, the
    subset of the chain that met the target (a bit mask), and the witness pair
    of the orbit it reduced to. Because the DFS is deterministic, the
    verifier replays the identical tree without searching. It regenerates the
    canonical forms, walks the same branches, treats a node whose depth matches
    the next record's as a recorded leaf, rechecking that the subset size plus
    the looked-up bound of the witnessed orbit meets the claimed bound, and
    descends into every child elsewhere, so a trace that skips any case fails
    to verify.
\end{itemize}

The verifier replays the dimension sweep of Section~\ref{sec:techniques},
checking the proof for each orbit in order from $0$-dimensional to
$N_A$-dimensional subspaces.
For flattening and forced product, the verifier recomputes the claimed lower bound directly,
without relying on any other orbit. For degenerate reduction and substitution with backtracking,
it reconstructs each claimed reduction and checks it against
the certified bound of the resulting strictly smaller subspace,
which has already been verified earlier in the sweep.
Thus, by induction on the subspace dimension,
every accepted orbit bound is valid. Finally,
if the verifier accepts the certificate for the unconstrained orbit (dimension $N_A$)
with bound $r$, then $\R(T)\ge r$.

\section*{Acknowledgements}
\ifanon
Omitted for double-blind review.
\else
We thank Markus Bl\"aser for pointing out the papers \cite{ja1985improved} and
\cite{shpilka2001lower}. We also thank Youming
Qiao for discussions on the tensor isomorphism problem and for reviewing a draft of
this paper.
\fi

\section*{AI Disclosure}
Generative AI tools materially contributed to this work. We used Claude, ChatGPT,
and Gemini for background research and literature search, and for exploring ideas,
such as generalizing the technique from matrix
multiplication to polynomial multiplication. We wrote the matrix multiplication lower bound
search code by hand and used Claude Code to generalize it to the polynomial
multiplication families; we also used Claude and ChatGPT to check mathematical derivations
and to draft and edit this paper. The author verified the correctness and originality
of all content, including references.

\bibliography{refs}

\appendix

\section{Complete bound tables by family}\label{app:bound-tables}

Tables~\ref{tab:matrix}--\ref{tab:nega} give the full lower- and upper-bound picture
for each problem family. The boxed entries are the new lower bounds proved in this
paper and collected in Table~\ref{tab:summary}.
In characteristic $2$ the negacyclic product coincides with cyclic
convolution ($x^N+1=x^N-1$), so Table~\ref{tab:nega} lists only $\F_3$.

In six entries the previously known lower bound exceeds ours by one: cyclic
convolution at $N=5$ over both $\F_2$ and $\F_3$ and at $N=9$ over $\F_2$
(Table~\ref{tab:cyc}), the truncated product at $N=5$ over $\F_2$ and at $N=4$
over $\F_3$ (Table~\ref{tab:trunc}), and the negacyclic product at $N=5$ over
$\F_3$ (Table~\ref{tab:nega}). These entries share a pattern: the previous bound
is tight (it meets the best known upper bound, so the rank is known exactly), and
it comes either from the algebra-specific minimal-rank theory of
Bl\"aser~\cite{blaser2005complete} and of Averbuch, Galil, and
Winograd~\cite{averbuch1991classification}, which exploits the multiplicative
structure of the quotient algebra (for cyclic $N=9$ over $\F_2$ combined with
the code-embedding bound on the field factor $\F_{2^6}$~\cite{lempel1977new,grassl2007codetables}),
or from the exhaustive
computer searches of
Barbulescu, Detrey, Estibals, and Zimmermann~\cite{barbulescu2012finding} and of
Covanov~\cite{covanov2019improved}, which enumerate all optimal decompositions of
one fixed small tensor. Our
substitution-based techniques do not subsume those arguments, and in each of these
cases they stop one short of the known rank; the ``our LB'' column always reports
what our framework proves, not the stronger inherited bound.

The tables also serve as regression tests for the implementation: in no entry does a
bound we prove exceed the best known upper bound, and in many small cases it matches
the exact rank already known.

\begin{table}[!ht]
  \centering
  \setlength{\tabcolsep}{14pt}
  \begin{tabular}{llll}
    \toprule
    format          & prev LB                                                          & our LB       & prev UB                                                      \\
    \midrule
    $\mmt{2}{2}{2}$ & $7$~\cite{hopcroft1971minimizing}                                & $7$          & $7$~\cite{strassen1969gaussian}                              \\
    $\mmt{2}{2}{3}$ & $11$~\cite{hopcroft1971minimizing}                               & $11$         & $11$ ($\langle 2, 2, 2:7\rangle + \langle 2, 2, 1:4\rangle$) \\
    $\mmt{2}{2}{4}$ & $14$~\cite{hopcroft1971minimizing}                               & $14$         & $14$ ($\langle 2, 2, 2:7\rangle + \langle 2, 2, 2:7\rangle$) \\
    $\mmt{2}{3}{3}$ & $15$~\cite{hopcroft1971minimizing}                               & $15$         & $15$~\cite{hopcroft1971minimizing}                           \\
    $\mmt{2}{3}{4}$ & $19$~\cite{ja1985improved}                                       & $19$         & $20$~\cite{hopcroft1971minimizing}                           \\
    $\mmt{3}{3}{3}$ & $19$~\cite{blaser2003complexity}                                 & $\boxed{20}$ & $23$~\cite{laderman1976noncommutative}                       \\
    $\mmt{3}{3}{4}$ & $24$~\cite{blaser2003complexity}                                 & $\boxed{25}$ & $29$~\cite{smirnov2013bilinear}                              \\
    $\mmt{3}{4}{4}$ & $28$~\cite{blaser1999lower}                                      & $\boxed{29}$ & $38$~\cite{smirnov2013bilinear}                              \\
    \bottomrule
  \end{tabular}
  \caption{Matrix multiplication tensor rank over $\F_2$; boxed entries are new.}
  \label{tab:matrix}
\end{table}

\begin{table}[!ht]
  \centering
  \setlength{\tabcolsep}{6pt}
  \begin{tabular}{lllllll}
    \toprule
    & \multicolumn{3}{c}{over $\F_2$}                                               & \multicolumn{3}{c}{over $\F_3$}                                                                                                                                                                                       \\
    \cmidrule(lr){2-4}\cmidrule(lr){5-7}
    $N$ & prev LB                                                                       & our LB                          & prev UB                                      & prev LB                                                                      & our LB       & prev UB                                \\
    \midrule
    $1$ & $1$                                                                           & $1$                             & $1$                                          & $1$                                                                          & $1$          & $1$                                    \\
    $2$ & $3$~\cite{fiduccia1977algebras}                                               & $3$                             & $3$~\cite{toom1963complexity}                & $3$~\cite{fiduccia1977algebras}                                              & $3$          & $3$~\cite{toom1963complexity}          \\
    $3$ & $6$~\cite{lempel1977new,brockett1978optimal,griesmer1960bound}     & $6$                             & $6$~\cite{kaminski1989multiplicative}        & $6$~\cite{lempel1977new,brockett1978optimal,grassl2007codetables} & $6$          & $6$~\cite{kaminski1989multiplicative}  \\
    $4$ & $9$~\cite{kaminski1989multiplicative}                                         & $9$                             & $9$~\cite{kaminski1989multiplicative}        & $9$~\cite{kaminski1989multiplicative}                                        & $9$          & $9$~\cite{kaminski1989multiplicative}  \\
    $5$ & $13$~\cite{lempel1977new,brockett1978optimal,grassl2007codetables} & $13$                            & $13$~\cite{montgomery2005five}               & $12$~\cite{kaminski1989multiplicative}                                       & $12$         & $12$~\cite{kaminski1989multiplicative} \\
    $6$ & $15$~\cite{lempel1977new,brockett1978optimal,grassl2007codetables} & $\boxed{16}$                    & $17$~\cite{montgomery2005five}               & $12$~\cite{lempel1977new,brockett1978optimal,solomon1965algebraically}   & $\boxed{14}$ & $15$~\cite{barbulescu2012finding}           \\
    $7$ & $18$~\cite{lempel1977new,brockett1978optimal,grassl2007codetables} & $\boxed{19}$                    & $22$~\cite{montgomery2005five}               &                                                                              &              &                                        \\
    $8$ & $20$~\cite{lempel1977new,brockett1978optimal,grassl2007codetables} & $\boxed{21}$                    & $26$~\cite{fan2007comments} &                                                                              &              &                                        \\
    \bottomrule
  \end{tabular}
  \caption{Full product of two degree-$(N-1)$ polynomials; boxed entries are new.}
  \label{tab:full}
\end{table}

\begin{table}[!ht]
  \centering
  \setlength{\tabcolsep}{6pt}
  \begin{tabular}{lllllll}
    \toprule
    & \multicolumn{3}{c}{over $\F_2$}                             & \multicolumn{3}{c}{over $\F_3$}                                                                                                                                                                                                            \\
    \cmidrule(lr){2-4}\cmidrule(lr){5-7}
    $N$  & prev LB                                                     & our LB                          & prev UB                                                      & prev LB                                                     & our LB       & prev UB                                                      \\
    \midrule
    $1$  & $1$                                                         & $1$                             & $1$                                                          & $1$                                                         & $1$          & $1$                                                          \\
    $2$  & $3$~\cite{winograd1977some}                                 & $3$                             & $3$~\cite{barbulescu2012finding}                         & $2$~\cite{winograd1977some}                                 & $2$          & $2$~\cite{wagh1983structured}                                \\
    $3$  & $4$~\cite{winograd1977some}                                 & $4$                             & $4$~\cite{wagh1983structured}                                & $5$~\cite{winograd1977some}                                 & $5$          & $5$~\cite{barbulescu2012finding}                         \\
    $4$  & $8$~\cite{blaser2005complete,barbulescu2012finding}         & $8$                             & $8$~\cite{barbulescu2012finding}  & $5$~\cite{winograd1977some}                                 & $5$          & $5$~\cite{wagh1983structured}                                \\
    $5$  & $10$~\cite{blaser2005complete,barbulescu2012finding}      & $9$                             & $10$~\cite{wagh1983structured}                               & $10$~\cite{blaser2005complete,barbulescu2012finding}        & $9$          & $10$~\cite{wagh1983structured}                               \\
    $6$  & $11$~\cite{averbuch1988classification,blaser2005complete}   & $11$                            & $12$~\cite{barbulescu2012finding}                        & $10$~\cite{winograd1977some}                                & $10$         & $10$~\cite{barbulescu2012finding}                        \\
    $7$  & $12$~\cite{degroote1983characterization,blaser2005complete} & $\boxed{13}$                    & $13$~\cite{wagh1983structured}                               & $13$~\cite{degroote1983characterization,blaser2005complete} & $13$         & $16$~\cite{cenk2010multiplication} \\
    $8$  & $16$~\cite{averbuch1991classification,blaser2005complete}   & $\boxed{19}$                    & $22$~\cite{cenk2011multiplication} & $11$~\cite{winograd1977some}                                & $11$         & $11$~\cite{wagh1983structured}                               \\
    $9$  & $19$~\cite{lempel1977new,grassl2007codetables,blaser2005complete}      & $18$                            & $19$~\cite{cenk2010multiplication}                        & $18$~\cite{averbuch1991classification,blaser2005complete}   & $\boxed{19}$ & $27$~\cite{cenk2011multiplication} \\
    $10$ & $19$~\cite{averbuch1988classification,blaser2005complete}   & $\boxed{22}$                    & $27$~\cite{randriambololona2012bilinear}                        &                                                             &              &                                                              \\
    \bottomrule
  \end{tabular}
  \caption{Cyclic convolution modulo $x^N-1$; boxed entries are new.}
  \label{tab:cyc}
\end{table}

\begin{table}[!ht]
  \centering
  \setlength{\tabcolsep}{6pt}
  \begin{tabular}{lllllll}
    \toprule
    & \multicolumn{3}{c}{over $\F_2$}                           & \multicolumn{3}{c}{over $\F_3$}                                                                                                                                                                                                  \\
    \cmidrule(lr){2-4}\cmidrule(lr){5-7}
    $N$  & prev LB                                                   & our LB                          & prev UB                                                  & prev LB                                                   & our LB       & prev UB                                                  \\
    \midrule
    $1$  & $1$                                                       & $1$                             & $1$                                                      & $1$                                                       & $1$          & $1$                                                      \\
    $2$  & $3$~\cite{winograd1977some,alder1981algorithmic}          & $3$                             & $3$~\cite{toom1963complexity}                            & $3$~\cite{winograd1977some,alder1981algorithmic}          & $3$          & $3$~\cite{toom1963complexity}                            \\
    $3$  & $5$~\cite{winograd1977some,alder1981algorithmic}          & $5$                             & $5$~\cite{averbuch1991classification,blaser2005complete} & $5$~\cite{winograd1977some,alder1981algorithmic}          & $5$          & $5$~\cite{averbuch1991classification,blaser2005complete} \\
    $4$  & $8$~\cite{averbuch1991classification,blaser2005complete}  & $8$                             & $8$~\cite{cenk2011multiplication}                        & $8$~\cite{averbuch1991classification,blaser2005complete}  & $7$          & $8$~\cite{cenk2011multiplication}                        \\
    $5$  & $11$~\cite{covanov2019improved}                           & $10$                            & $11$~\cite{cenk2011multiplication}                       & $10$~\cite{averbuch1991classification,blaser2005complete} & $10$         & $10$~\cite{barbulescu2012finding}                        \\
    $6$  & $12$~\cite{averbuch1991classification,blaser2005complete} & $\boxed{13}$                    & $14$~\cite{cenk2011multiplication}                       & $12$~\cite{averbuch1991classification,blaser2005complete} & $12$         & $14$~\cite{cenk2011multiplication}                       \\
    $7$  & $14$~\cite{averbuch1991classification,blaser2005complete} & $\boxed{16}$                    & $18$~\cite{cenk2011multiplication}                       & $14$~\cite{averbuch1991classification,blaser2005complete} & $14$         & $18$~\cite{cenk2011multiplication}                       \\
    $8$  & $16$~\cite{averbuch1991classification,blaser2005complete} & $\boxed{19}$                    & $22$~\cite{cenk2011multiplication}                       & $16$~\cite{averbuch1991classification,blaser2005complete} & $\boxed{17}$ & $22$~\cite{cenk2011multiplication}                       \\
    $9$  & $18$~\cite{averbuch1991classification,blaser2005complete} & $\boxed{21}$                    & $27$~\cite{cenk2011multiplication}$^\ast$                & $18$~\cite{averbuch1991classification,blaser2005complete} & $\boxed{19}$ & $27$~\cite{cenk2011multiplication}$^\ast$                      \\
    $10$ & $20$~\cite{averbuch1991classification,blaser2005complete} & $\boxed{24}$                    & $31$~\cite{cenk2011multiplication}                       & $20$~\cite{averbuch1991classification,blaser2005complete} & $\boxed{21}$ & $31$~\cite{cenk2011multiplication}                             \\
    $11$ & $22$~\cite{averbuch1991classification,blaser2005complete} & $\boxed{27}$                    & $36$~\cite{cenk2011multiplication}                       &                                                           &              &                                                          \\
    \bottomrule
  \end{tabular}

  \smallskip
  {\footnotesize\raggedright
    $^\ast$ We obtain a new bilinear complexity upper bound of $26$ for multiplication over $\mathbb{Z}[x]/x^9$ by flip-graph search over $\F_2$~\cite{kauers2023flip} and lifting to $\mathbb{Z}$.
    \scriptsize
    $(5a1-6a2+2a3+a5-a6)\otimes (3b0+2b1+b2)\otimes (-c6-c7-c8) + a0\otimes b0\otimes (c0-c3-c5-c8) + (a1-a2)\otimes (3b0+2b1+b2)\otimes (-c2-c3-c4-c6) + (a0-a1)\otimes (b0+b1+3b2+3b3+b4+b6+b7)\otimes (-c7-c8) + (2a0-a1)\otimes (b0-b5+b7)\otimes c7 + (3a0-2a1+a2-2a3+a4)\otimes (4b0+3b1+b2+b3+b4)\otimes c4 + (a2-4a3+2a4-2a5+3a6-a7)\otimes (b0+b1)\otimes c7 + (a1-a2)\otimes (3b0+b1-b2-3b3-3b4-2b5-b6)\otimes (c6+c7+c8) + (a2-a3+a4-2a5+a6)\otimes (2b0+2b1+b2)\otimes (-c6-c7) + (a0-a2+a3)\otimes (b0+b1+b4+b5)\otimes (-c3+c5+c8) + (2a0-a1)\otimes (b0+b1)\otimes (c1-c2-c3-c4-c5) + (a0-2a1+2a2-a3)\otimes (6b0+4b1+2b2+b3+b4+b5)\otimes (-c3-c6-c7-c8) + (a0+2a1-a2-2a3+2a4-a5)\otimes (b0+b1+b2)\otimes c6 + a0\otimes (b1+b2+b3+b4+b6+b7+b8)\otimes c8 + (a0-a1+a2-2a3+a4)\otimes (5b0+4b1+b2+b3+b4)\otimes (-c4-c7) + (a0-a1-a2-a3+a4)\otimes (5b0+3b1+2b2+2b3+b4)\otimes (-c4-c5) + (a0+a1-a2)\otimes (3b0+2b1+b2+b3+b4+b5+b6)\otimes (c6+c7) + (6a0-3a1-2a2-a3+2a6-a7)\otimes (2b0+b1)\otimes (-c7-c8) + (6a0-a1+a3-a4+a6-2a7+a8)\otimes b0\otimes c8 + (a2-a3)\otimes (b2-b4-b5)\otimes (c3-c5-c6-c8) + (a2+a3-a4)\otimes (7b0+4b1+2b2+2b3+b4)\otimes (-c4-c5-c7-c8) + (2a0-2a1+2a2-a3)\otimes (4b0+3b1+b2+b3+b4+b5)\otimes (c3+c7) + (a0+a1-a2)\otimes (2b0+2b1+b2)\otimes (c2+c7+c8) + (a0+a3-2a4+a5)\otimes (2b0+2b1+b2+b3)\otimes (-c5-c7) + (3a0-2a1+a2+a3-2a4+a5)\otimes (3b0+2b1+b2+b3)\otimes (c5+c7+c8) + (a0-a1)\otimes (2b0+b1)\otimes (-c1+c3+c6+c7+c8)$
  \par}
  \caption{Truncated product modulo $x^N$; boxed entries are new lower bounds.}
  \label{tab:trunc}
\end{table}

\begin{table}[!ht]
  \centering
  \setlength{\tabcolsep}{12pt}
  \begin{tabular}{llll}
    \toprule
    & \multicolumn{3}{c}{over $\F_3$}                                                                                                       \\
    \cmidrule(lr){2-4}
    $N$ & prev LB                                                     & our LB       & prev UB                                                  \\
    \midrule
    $1$ & $1$                                                         & $1$          & $1$                                                      \\
    $2$ & $3$~\cite{winograd1977some}                                 & $3$          & $3$~\cite{toom1963complexity}                            \\
    $3$ & $5$~\cite{winograd1977some}                                 & $5$          & $5$~\cite{averbuch1991classification,blaser2005complete} \\
    $4$ & $6$~\cite{winograd1977some}                                 & $6$          & $6$~\cite{degroote1983characterization}                  \\
    $5$ & $10$~\cite{blaser2005complete,barbulescu2012finding}        & $9$          & $10$~\cite{cenk2010multiplication}                       \\
    $6$ & $12$~\cite{averbuch1988classification,blaser2005complete}   & $12$         & $15$~\cite{barbulescu2012finding}                             \\
    $7$ & $13$~\cite{degroote1983characterization,blaser2005complete} & $13$         & $16$~\cite{cenk2010multiplication}                       \\
    $8$ & $16$~\cite{blaser2005complete,barbulescu2012finding}        & $16$         & $18$~\cite{cenk2010multiplication}                       \\
    $9$ & $18$~\cite{averbuch1991classification,blaser2005complete}   & $\boxed{19}$ & $27$~\cite{cenk2011multiplication}                       \\
    \bottomrule
  \end{tabular}
  \caption{Negacyclic convolution modulo $x^N+1$; the boxed entry is
    new. Over $\F_2$, $x^N+1=x^N-1$, so the negacyclic product coincides with cyclic
  convolution (Table~\ref{tab:cyc}).}
  \label{tab:nega}
\end{table}
\clearpage

\section{Computational results}\label{app:results}

Table~\ref{tab:results} reports the search and verification details for each new bound.
The bounds themselves and their comparison to prior work appear in
Tables~\ref{tab:matrix}--\ref{tab:nega} of Appendix~\ref{app:bound-tables}.

\begin{table}[!ht]
  \centering
  \setlength{\tabcolsep}{4pt}
  \begin{tabular}{lllcllc}
    \toprule
    result                                    & machine$^\dagger$        & \shortstack{search\\time} & \shortstack{search\\cost} & \shortstack{cert.\\size} & \shortstack{verify\\time} & \shortstack{verify\\cost} \\
    \midrule
    $\Rk{\F_2}{\mmt{3}{3}{3}}\ge 20$          & MacBook Air              & $40$ min                  & --                        & $32$ MiB                 & $3$ sec                   & --                        \\
    $\Rk{\F_2}{\mmt{3}{3}{4}}\ge 25$          & c8g.\{16,48\}xlarge$^\S$ & $4.6$ days                & $52$ USD                  & $600$ MiB                & $70$ min                  & $1$ USD                   \\
    $\Rk{\F_2}{\mmt{3}{4}{4}}\ge 29$          & c8g.48xlarge             & $4.9$ hrs                 & $5$ USD                   & $500$ MiB                & $3.4$ hrs                 & $3$ USD                   \\
    \midrule
    $\Rk{\F_2}{\Fullt{6}}\ge 16$              & c8g.48xlarge             & $11$ min                  & --                        & --                       & $10$ min                  & --                        \\
    $\Rk{\F_2}{\Fullt{7}}\ge 19$              & c8g.48xlarge             & $3$ hrs                   & $3$ USD                   & $9$ MiB                  & $2.6$ hrs                 & $3$ USD                   \\
    $\Rk{\F_2}{\Fullt{8}}\ge 21$              & c8g.48xlarge             & $6$ hrs                   & $6$ USD                   & $224$ MiB                & $3$ hrs                   & $3$ USD                   \\
    $\Rk{\F_3}{\Fullt{6}}\ge 14$              & c8g.48xlarge             & $35$ min                  & --                        & $6$ MiB                  & $31$ min                  & --                        \\
    \midrule
    $\Rk{\F_2}{\Cyct{7}}\ge 13$               & c8g.48xlarge             & --                        & --                        & --                       & --                        & --                        \\
    $\Rk{\F_2}{\Cyct{8}}\ge 19$               & c8g.48xlarge             & --                        & --                        & --                       & --                        & --                        \\
    $\Rk{\F_2}{\Cyct{10}}\ge 22$              & c8g.48xlarge             & --                        & --                        & $8$ MiB                  & --                        & --                        \\
    $\Rk{\F_3}{\Cyct{9}}\ge 19$               & c8g.48xlarge             & $3$ min                   & --                        & $3$ MiB                  & --                        & --                        \\
    \midrule
    $\Rk{\F_2}{\Trunct{6}}\ge 13$             & c8g.48xlarge             & --                        & --                        & --                       & --                        & --                        \\
    $\Rk{\F_2}{\Trunct{7}}\ge 16$             & c8g.48xlarge             & --                        & --                        & --                       & --                        & --                        \\
    $\Rk{\F_2}{\Trunct{8}}\ge 19$             & c8g.48xlarge             & --                        & --                        & --                       & --                        & --                        \\
    $\Rk{\F_2}{\Trunct{9}}\ge 21$             & c8g.48xlarge             & --                        & --                        & --                       & --                        & --                        \\
    $\Rk{\F_2}{\Trunct{10}}\ge 24$            & c8g.48xlarge             & $1$ min                   & --                        & $3$ MiB                  & --                        & --                        \\
    $\Rk{\F_2}{\Trunct{11}}\ge 27$            & c8g.48xlarge             & $7$ min                   & --                        & $22$ MiB                 & --                        & --                        \\
    $\Rk{\F_3}{\Trunct{8}}\ge 17$             & c8g.48xlarge             & --                        & --                        & --                       & --                        & --                        \\
    $\Rk{\F_3}{\Trunct{9}}\ge 19$             & c8g.48xlarge             & $2$ min                   & --                        & $2$ MiB                  & --                        & --                        \\
    $\Rk{\F_3}{\Trunct{10}}\ge 21$            & c8g.48xlarge             & $3.4$ hrs                 & $3$ USD                   & $20$ MiB                 & --                        & --                        \\
    \midrule
    $\Rk{\F_3}{\Negat{9}}\ge 19$              & c8g.48xlarge             & $3$ min                   & --                        & $2$ MiB                  & --                        & --                        \\
    \bottomrule
  \end{tabular}

  \smallskip
  {\footnotesize\raggedright
    -- Negligible.\\
    $^\dagger$ Machines: Apple MacBook Air M4 with 16 GB RAM; AWS c8g.16xlarge Spot
    Instance, Graviton4 processors, 64 vCPUs, 128 GiB memory; AWS c8g.48xlarge Spot
    Instance, Graviton4 processors, 192 vCPUs, 384 GiB memory.\\
    $^\S$ Search on c8g.16xlarge. Verification on c8g.48xlarge.\\
    \par
  }
  \caption{Search and verification details for each reported lower bound.}
  \label{tab:results}
\end{table}

\clearpage
\section{Toy example in detail: \texorpdfstring{$\Rk{\F_2}{\mmt{2}{2}{2}}\ge 7$}{rank over F2 of <2,2,2> >= 7}}\label{sec:toy-in-detail}

We illustrate the dynamic program on $\mmt{2}{2}{2}$ over $\F_2$ in detail, where, as in
Section~\ref{sec:toy}, $A=\F_2^{2\times2}$ is the space of first matrix arguments,
$X\in A$, and the symmetry group is
$G=(\GL_2(\F_2)\times\GL_2(\F_2))\rtimes C_2$ acting by $X\mapsto PXQ^{-1}$ and by
transposition. There are ten orbits of subspaces. The program processes
them from the most constrained to the unconstrained tensor, assigning each a lower
bound by whichever of the four techniques of Section~\ref{sec:techniques} succeeds
first. We write $a_{ij}\in A^*$ for the coordinate functionals that survive the constraints
and $c_{ki}=e_{ik}\in C$ for the output matrix units; the constrained tensor is
$\sum_{i,j,k} a_{ij}\otimes b_{jk}\otimes c_{ki}$ restricted to them.

\subsection*{Orbit 0: $\{a_{00}=0,\ a_{01}=0,\ a_{10}=0,\ a_{11}=0\}$ (flatten)}
$X=
\begin{pmatrix}0 & 0 \\0&0
\end{pmatrix}$. The constrained tensor is $T_0=0$, so
$\R(T_0)=0$.

\subsection*{Orbit 1: $\{a_{00}=0,\ a_{01}=0,\ a_{10}=0\}$ (flatten)}
$X=
\begin{pmatrix}0 & 0 \\0&a_{11}
\end{pmatrix}$, with tensor
$T_1=a_{11}\otimes b_{10}\otimes c_{01}+a_{11}\otimes b_{11}\otimes c_{11}$.
Flattening the first two factors into one yields a matrix of rank $2$, so
$\R(T_1)\ge 2$.

\subsection*{Orbit 2: $\{a_{00}=0,\ a_{01}+a_{10}=0,\ a_{11}=0\}$ (flatten)}
$X=
\begin{pmatrix}0 & a_{01} \\a_{01}&0
\end{pmatrix}$, with tensor
$T_2=a_{01}\otimes b_{00}\otimes c_{01}+a_{01}\otimes b_{01}\otimes c_{11}
+a_{01}\otimes b_{10}\otimes c_{00}+a_{01}\otimes b_{11}\otimes c_{10}$.
Flattening yields a matrix of rank $4$, so $\R(T_2)\ge 4$.

\subsection*{Orbit 3: $\{a_{00}=0,\ a_{01}=0\}$ (flatten)}
$X=
\begin{pmatrix}0 & 0 \\a_{10}&a_{11}
\end{pmatrix}$, with tensor
$T_3=a_{10}\otimes b_{00}\otimes c_{01}+a_{10}\otimes b_{01}\otimes c_{11}
+a_{11}\otimes b_{10}\otimes c_{01}+a_{11}\otimes b_{11}\otimes c_{11}$.
Flattening yields rank $4$, so $\R(T_3)\ge 4$.

\subsection*{Orbit 4: $\{a_{00}=0,\ a_{01}+a_{10}=0\}$ (forced product)}
$X=
\begin{pmatrix}0 & a_{01} \\a_{01}&a_{11}
\end{pmatrix}$, with tensor
\[
  T_4=a_{01}\otimes b_{00}\otimes c_{01}+a_{01}\otimes b_{01}\otimes c_{11}
  +a_{01}\otimes b_{10}\otimes c_{00}+a_{01}\otimes b_{11}\otimes c_{10}
  +a_{11}\otimes b_{10}\otimes c_{01}+a_{11}\otimes b_{11}\otimes c_{11}.
\]
Flattening yields only $4$, so we invoke the forced-product technique
(Section~\ref{sec:forced-product}), which automates the substitution of Hopcroft and
Kerr stated in Lemma~\ref{lem:forced-product}.

Fix an optimal decomposition $T_4=\sum_{\lambda} u_\lambda\otimes v_\lambda\otimes
w_\lambda$ with $r=\R(T_4)$ terms. Grouping $T_4$ along the $C$ factor exposes its
slices as forms in $A^*\otimes B^*$: the slice at $c_{00}$ is the single product
$a_{01}\otimes b_{10}$ and the slice at $c_{10}$ is the single product
$a_{01}\otimes b_{11}$, and these two are linearly independent (so $s=2$). By
Lemma~\ref{lem:forced-product} we may assume the decomposition computes them
literally, say $u_0\otimes v_0=a_{01}\otimes b_{10}$ and
$u_1\otimes v_1=a_{01}\otimes b_{11}$.

Now strip these two terms from $T_4$. The lemma pins down their $A^*\otimes B^*$ parts
but not their $C$-components $w_0,w_1$: each must match the slice it computes along
$c_{00}$ (resp.\ $c_{10}$), but its coordinates along the remaining basis vectors
$c_{01},c_{11}$ are unconstrained. Over $\F_2$ this leaves four unknown bits
$\mu_0,\dots,\mu_3$, and the residual is
\[
  T_4'=T_4+a_{01}\otimes b_{10}\otimes(c_{00}+\mu_0 c_{01}+\mu_1 c_{11})
  +a_{01}\otimes b_{11}\otimes(c_{10}+\mu_2 c_{01}+\mu_3 c_{11}),
  \qquad \mu_i\in\{0,1\}.
\]
Enumerating all $2^4=16$ assignments of $\mu$ and flattening each gives $\R(T_4')\ge 4$ in
every case, so $\R(T_4)\ge 4+2=6$.

\subsection*{Orbit 5: $\{a_{00}=0,\ a_{11}=0\}$ (flatten)}
$X=
\begin{pmatrix}0 & a_{01} \\a_{10}&0
\end{pmatrix}$, with tensor
$T_5=a_{01}\otimes b_{10}\otimes c_{00}+a_{01}\otimes b_{11}\otimes c_{10}
+a_{10}\otimes b_{00}\otimes c_{01}+a_{10}\otimes b_{01}\otimes c_{11}$.
Flattening yields rank $4$, so $\R(T_5)\ge 4$.

\subsection*{Orbit 6: $\{a_{01}+a_{10}=0,\ a_{00}+a_{01}+a_{11}=0\}$ (substitution with backtracking)}
$X=
\begin{pmatrix}a_{00} & a_{01} \\a_{01}&a_{00}+a_{01}
\end{pmatrix}$. Flattening gives
only $4$. Here we demonstrate substitution with backtracking
(Section~\ref{sec:backtracking}).
Fix an optimal decomposition $T_6=\sum_\lambda u_\lambda\otimes v_\lambda\otimes
w_\lambda$; by Lemma~\ref{lem:substitution} we may take each $u_\lambda$ to be a nonzero element of
$\operatorname{span}\{a_{00},a_{01}\}$. The canonical forms are the nonzero linear forms in
$a_{00},a_{01}$ over $\F_2$, ordered $f_0=a_{00}$, $f_1=a_{01}$,
$f_2=a_{00}+a_{01}$; every nonzero $u_\lambda$ is equivalent to one of them.
Imposing any single form as an extra constraint lands in Orbit~2, of certified
bound $4$:
\begin{itemize}
  \item $a_{00}=0$ leaves $
    \begin{pmatrix}0 & a_{01} \\a_{01}&a_{01}
    \end{pmatrix}$,
    Orbit 2 after adding row $0$ to row $1$;
  \item $a_{01}=0$ leaves $
    \begin{pmatrix}a_{00} & 0 \\0&a_{00}
    \end{pmatrix}$,
    Orbit 2 after swapping the rows;
  \item $a_{00}+a_{01}=0$ leaves $
    \begin{pmatrix}a_{00} & a_{00} \\a_{00}&0
    \end{pmatrix}$,
    Orbit 2 after adding row $1$ to row $0$;
\end{itemize}
imposing two distinct forms forces $X=0$, Orbit~0, of bound $0$.

The search tree of the final run, with target $6$, is the following, where
``$6$ not met'' means that no subset of the
chain proves the target and ``Proved; backtrack'' marks a successful leaf.
\begin{itemize}
  \item $a_{00}$: $6$ not met; go deeper.
    \begin{itemize}
      \item $a_{00},a_{00}$: set $a_{00}=0$, reaching Orbit 2; $\R(T_6)\ge\R(T_2)+2\ge 6$. Proved; backtrack.
      \item $a_{00},a_{01}$: $6$ not met; go deeper.
        \begin{itemize}
          \item $a_{00},a_{01},a_{01}$: set $a_{01}=0$, reaching Orbit 2; $\R(T_6)\ge 6$. Proved; backtrack.
          \item $a_{00},a_{01},a_{00}+a_{01}$: $6$ not met; go deeper.
            \begin{itemize}
              \item $a_{00},a_{01},a_{00}+a_{01},a_{00}+a_{01}$: set $a_{00}+a_{01}=0$, reaching Orbit 2; $\R(T_6)\ge 6$. Proved; backtrack.
            \end{itemize}
        \end{itemize}
      \item $a_{00},a_{00}+a_{01}$: $6$ not met; go deeper.
        \begin{itemize}
          \item $a_{00},a_{00}+a_{01},a_{00}+a_{01}$: set $a_{00}+a_{01}=0$, reaching Orbit 2; $\R(T_6)\ge 6$. Proved; backtrack.
        \end{itemize}
    \end{itemize}
  \item $a_{01}$: $6$ not met; go deeper.
    \begin{itemize}
      \item $a_{01},a_{01}$: set $a_{01}=0$, reaching Orbit 2; $\R(T_6)\ge 6$. Proved; backtrack.
      \item $a_{01},a_{00}+a_{01}$: $6$ not met; go deeper.
        \begin{itemize}
          \item $a_{01},a_{00}+a_{01},a_{00}+a_{01}$: set $a_{00}+a_{01}=0$, reaching Orbit 2; $\R(T_6)\ge 6$. Proved; backtrack.
        \end{itemize}
    \end{itemize}
  \item $a_{00}+a_{01}$: $6$ not met; go deeper.
    \begin{itemize}
      \item $a_{00}+a_{01},a_{00}+a_{01}$: set $a_{00}+a_{01}=0$, reaching Orbit 2; $\R(T_6)\ge 6$. Proved; backtrack.
    \end{itemize}
\end{itemize}
Every branch reaches the target, so $\R(T_6)\ge 6$.

\subsection*{Orbit 7: $\{a_{00}=0\}$ (degenerate reduction)}
$X=
\begin{pmatrix}0 & a_{01} \\a_{10}&a_{11}
\end{pmatrix}$. Adding the constraint
$a_{01}+a_{10}=0$ lands in Orbit 4, so degenerate reduction
(Section~\ref{sec:degenerate}) gives $\R(T_7)\ge\R(T_4)\ge 6$.

\subsection*{Orbit 8: $\{a_{01}+a_{10}=0\}$ (degenerate reduction)}
$X=
\begin{pmatrix}a_{00} & a_{01} \\a_{01}&a_{11}
\end{pmatrix}$. Adding the constraint
$a_{00}=0$ lands in Orbit 4, so $\R(T_8)\ge\R(T_4)\ge 6$.

\subsection*{Orbit 9: $\{\,\}$ (substitution with backtracking)}
$X=
\begin{pmatrix}a_{00} & a_{01} \\a_{10}&a_{11}
\end{pmatrix}$ is the unconstrained
tensor $T_9=\sum_{i,j,k}a_{ij}\otimes b_{jk}\otimes c_{ki}=\mmt{2}{2}{2}$. We apply
substitution with backtracking. Fix an optimal decomposition
$T_9=\sum_\lambda u_\lambda\otimes v_\lambda\otimes w_\lambda$. If $a_{00}$ appears
as some $u_\lambda$, setting $a_{00}=0$ lands in Orbit 7 and gives
$\R(T_9)\ge\R(T_7)+1\ge 7$; by the sandwich symmetry the same holds when the form
is any of $a_{01}$, $a_{10}$, $a_{11}$, $a_{00}+a_{01}$, $a_{10}+a_{11}$,
$a_{00}+a_{10}$, $a_{01}+a_{11}$, or $a_{00}+a_{01}+a_{10}+a_{11}$. Otherwise the
form is one of $a_{01}+a_{10}$, $a_{00}+a_{11}$, $a_{00}+a_{01}+a_{10}$,
$a_{00}+a_{01}+a_{11}$, $a_{00}+a_{10}+a_{11}$, or $a_{01}+a_{10}+a_{11}$, and
setting it to zero lands in Orbit 8, giving $\R(T_9)\ge\R(T_8)+1\ge 7$. These
exhaust the nonzero forms, so $\Rk{\F_2}{\mmt{2}{2}{2}}=\R(T_9)\ge 7$.

The search tree here is only one level deep: whichever form is substituted first
lands in Orbit 7 or Orbit 8, both certified at $6$, so a single substitution already
meets the target of $7$ and no branch has to go deeper.

\section{Orbit-enumeration algorithm}\label{app:enum}

Algorithm~\ref{alg:enum} spells out the orbit enumerator of
Section~\ref{sec:orbits}: each codimension-$d^{\perp}$ representative is built by extending
a codimension-$(d^{\perp}-1)$ one with a new highest-pivot constraint (representatives are
manipulated as the RREFs of their constraint sets), and the inner
duplicate test is the meet-in-the-middle probe of that section. The
$\textit{visited}$ set is cleared at the start of each codimension, since the
codimension is an invariant of the group action.

\begin{algorithm}[!ht]
  \caption{Enumerate subspace orbits}
  \label{alg:enum}
  \begin{algorithmic}[1]
    \State $\textit{reps}[0]\gets[\varnothing]$
    \For{$d^{\perp}=1$ \textbf{to} $N_A$}
    \State $\textit{visited}\gets\varnothing$
    \For{each representative $c\in \textit{reps}[d^{\perp}-1]$ (in lex order)}
    \For{each candidate constraint $r$ above all pivots of $c$ (in lex order)}
    \State $q \gets c \cup \{r\}$
    \If{\textbf{not} \Call{Seen}{$q$}}
    \State append $q$ to $\textit{reps}[d^{\perp}]$
    \For{$\sigma\in\Sigma$}\ \ insert $\RREF(\sigma(q))$ into \textit{visited}\EndFor
    \EndIf
    \EndFor
    \EndFor
    \EndFor
    \State \Return \textit{reps}
    \Statex
    \Function{Seen}{$q$}
    \For{$\kappa\in\Kappa$}
    \If{$\RREF(\kappa(q))\in\textit{visited}$}\ \Return \textbf{true}\EndIf
    \EndFor
    \State \Return \textbf{false}
    \EndFunction
  \end{algorithmic}
\end{algorithm}

\subsection*{Orbit counts}\label{app:counts}

Throughout this subsection we take matrix multiplication as the example family;
the polynomial families are swept by the same procedure.
Table~\ref{tab:orbits} reports the number of subspace orbits Algorithm~\ref{alg:enum} finds
for each matrix multiplication format over $\F_2$ and $\F_3$. The count depends on whether the transpose
symmetry is in force, which happens exactly when the format is square ($l=m=n$).

\begin{table}[!ht]
  \centering
  \setlength{\tabcolsep}{12pt}
  \begin{tabular}{llrr}
    \toprule
    format $l\times m$ & transpose? & \# orbits over $\F_2$ & \# orbits over $\F_3$          \\
    \midrule
    $2\times 2$        & yes        & $10$                  & $10$                           \\
    $2\times 2$        & no         & $11$                  & $11$                           \\
    $2\times 3$        & no         & $31$                  & $31$                           \\
    $2\times 4$        & no         & $86$                  & $91$                           \\
    $3\times 3$        & yes        & $496$                 & $736$                          \\
    $3\times 3$        & no         & $710$                 & $1046$                         \\
    $3\times 4$        & no         & $158{,}426$           & ${>}1.6\times10^{6}\,^\dagger$ \\
    $4\times 4$        & yes        & ${>}1.6\times10^{11}$ & ${>}8.8\times10^{15}$          \\
    \bottomrule
  \end{tabular}

  \smallskip
  {\footnotesize\raggedright
    $^\dagger$ The exact value is computable but unlikely to improve any matrix
  bound.\par}
  \caption{Number of subspace orbits for matrix multiplication,
    $A=\F_q^{l\times m}$. ``transpose?'' marks whether the order-two transpose
    symmetry is included; it applies only to square formats $l=m=n$. The ${>}$
  entries follow from the exact lower bound in Equation~\eqref{eq:orbit-count}.}
  \label{tab:orbits}
\end{table}

Where the exact orbit count is out of reach, an exact counting expression still
gives a rigorous lower bound. There are
$\genfrac{[}{]}{0pt}{}{N}{d}_q$ dimension-$d$ subspaces of $\F_q^N$, where
$\genfrac{[}{]}{0pt}{}{N}{d}_q$ is the Gaussian binomial coefficient. Since
each orbit has size at most the group order,
\begin{equation}\label{eq:orbit-count}
  \#\bigl(\operatorname{Sub}(\F_q^{lm})/G\bigr)
  \;\ge\;
  \frac{\displaystyle\sum_{d=0}^{lm}\genfrac{[}{]}{0pt}{}{lm}{d}_q}
  {|\GL_l(\F_q)|\,|\GL_m(\F_q)|\,|H|}
  ,
  \qquad
  \genfrac{[}{]}{0pt}{}{N}{d}_q
  =\prod_{i=0}^{d-1}\frac{q^{N-i}-1}{q^{d-i}-1},
\end{equation}
where $H=C_2$ when the transpose symmetry is present ($l=m=n$) and $H=C_1$
otherwise, and $|\GL_n(\F_q)|=\prod_{k=0}^{n-1}(q^n-q^k)$.
This produces the
${>}$ entries of Table~\ref{tab:orbits}. Both the estimate and the
enumeration can be nearly halved by duality: orthogonal
complement matches each orbit of dimension-$d$ subspaces with one of dimension
$lm-d$, so only the
dimensions $d\le\lfloor lm/2\rfloor$ need be enumerated.

The $4\times 4$ row makes the scaling barrier concrete: even over $\F_2$ there are
more than $10^{11}$ orbits, so merely enumerating them is already out of reach, let
alone running the dynamic program on each. This orbit explosion
is the main obstacle to applying the method to $\mmt{4}{4}{4}$.

\subsection*{More invariants}\label{app:invariants}

Clearing $\textit{visited}$ per codimension (equivalently, per dimension) exploits a single $G$-invariant, the
dimension: two subspaces in the same orbit always have the same
dimension, so the set never needs to hold subspaces of different dimensions
simultaneously. The same idea applies to \emph{any} $G$-invariant. Fixing a value
(or a tuple of values) of one or more invariants, keeping in $\textit{visited}$
only the representatives realizing that value, and sweeping the values one at a
time replaces a single large hash set with several small ones and never merges
orbits across values, since equivalent subspaces agree on every invariant. We do
not currently exploit invariants beyond the dimension, as enumeration memory is
not the bottleneck of our results. Taking matrix multiplication as an example,
the invariants of $G=(\GL_l(\F_q)\times\GL_m(\F_q))\rtimes C_2$ acting on
$A=\F_q^{l\times m}$ include the following.

\begin{itemize}
  \item \textbf{Rank distribution:} $\rho_i(S)=|\{M\in S:\rank M=i\}|$ for
    $i=0,\dots,\min(l,m)$. The sandwich action $M\mapsto PMQ^{-1}$ and the
    transpose preserve the rank of each $M$, so $\rho$ is constant on orbits.
  \item \textbf{Point profiles:} let $B_0,\dots,B_{d-1}$ be a basis of $S$.
    The left profile is
    $\lambda_i(S)=|\{x\in\F_q^{l}:\rank[x^\top B_0,\dots,x^\top B_{d-1}]=i\}|$ for
    $i=0,\dots,m$, and the right profile is
    $\mu_i(S)=|\{y\in\F_q^{m}:\rank[B_0y,\dots,B_{d-1}y]=i\}|$ for $i=0,\dots,l$.
    Left multiplication preserves $\lambda$, right multiplication preserves $\mu$,
    and a change of basis of $S$ preserves both; when $l=m=n$ the transpose swaps
    them, so the unordered pair $\{\lambda,\mu\}$ is invariant.
\end{itemize}

\end{document}